\def\BibTeX{{\rm B\kern-.05em{\sc i\kern-.025em b}\kern-.08em
    T\kern-.1667em\lower.7ex\hbox{E}\kern-.125emX}}
\pgfplotsset{compat=1.3}
\tikzset{fontscale/.style = {font=\relsize{#1}}
    }
\definecolor{lavander}{cmyk}{0,0.48,0,0}
\definecolor{violet}{cmyk}{0.79,0.88,0,0}
\definecolor{burntorange}{cmyk}{0,0.52,1,0}
\definecolor{asuorange}{rgb}{1,0.699,0.0625}
\definecolor{asured}{rgb}{0.598,0,0.199}
\definecolor{asuborder}{rgb}{0.953,0.484,0}
\definecolor{asugrey}{rgb}{0.309,0.332,0.340}
\definecolor{asublue}{rgb}{0,0.555,0.836}
\definecolor{asugold}{rgb}{1,0.777,0.008}
\pgfplotsset{compat=newest}
\newcommand{\matr}[1]{\mathbf{#1}}
\newtheorem{Lemma}{Lemma}
\newtheorem{prob}{Problem}
\newtheorem{Theorem}{Theorem}
\newtheorem{Def}{Definition}
\newtheorem{assumption}{H\!\!}
    \def\multilimits@{\bgroup
  \Let@
  \restore@math@cr
  \default@tag
 \baselineskip\fontdimen10 \scriptfont\tw@
 \advance\baselineskip\fontdimen12 \scriptfont\tw@
 \lineskip\thr@@\fontdimen8 \scriptfont\thr@@
 \lineskiplimit\lineskip
 \vbox\bgroup\ialign\bgroup\hfil$\m@th\scriptstyle{##}$\hfil\crcr}
    \def\Sb{_\multilimits@}
    \def\endSb{\crcr\egroup\egroup\egroup}
\DeclareRobustCommand*\cal{\@fontswitch\relax\mathcal}
\begin{document}

\title{On Detecting Low-pass Graph Signals\\under Partial Observations \thanks{This work is supported in part by  HKRGC Project \#24203520.}
}

\author{\IEEEauthorblockN{Hoang-Son Nguyen}
\IEEEauthorblockA{\textit{Dept. of Computer Science \& Engineering} \\
\textit{The Chinese University of Hong Kong}\\
sonngh.00@link.cuhk.edu.hk}
\and
\IEEEauthorblockN{Hoi-To Wai}
\IEEEauthorblockA{\textit{Dept. of Systems Engineering \& Engineering Management} \\
\textit{The Chinese University of Hong Kong}\\
htwai@se.cuhk.edu.hk}
}

\maketitle

\begin{abstract}
        The application of graph signal processing (GSP) on partially observed graph signals with missing nodes has gained attention recently. This is because processing data from large graphs are difficult, if not impossible due to the lack of availability of full observations. Many prior works have been developed using the assumption that the generated graph signals are smooth or low pass filtered. This paper treats a blind graph filter detection problem under this context. We propose a detector that certifies whether the partially observed graph signals are low pass filtered, without requiring the graph topology knowledge. As an example application, our detector leads to a pre-screening method to filter out non low pass signals and thus robustify the prior GSP algorithms. We also bound the sample complexity of our detector in terms of the class of filters, number of observed nodes, etc. Numerical experiments verify the efficacy of our method.
\end{abstract}

\begin{IEEEkeywords}
graph signal processing, low pass graph filter, partial observations.
\end{IEEEkeywords}

\section{Introduction}
An important goal of graph signal processing (GSP) \cite{ortega2018graph} is to extract insights from complex network data. Using graph shift operator (GSO), graph filters \& signals as the underlying constructs, GSP has led to many theoretically justified graph learning methods \cite{mateos2019connecting, dong2019learning}, e.g., prior works showed how to estimate the structure of weather \cite{thanou2017learning} and brain networks \cite{huang2018graph}. Meanwhile, the overwhelming size of complex networks has necessitated practical methods to consider the \emph{partial observation} setting where a fraction of nodes are never observed. 

The partial observation setting may break a number of properties such as structure of eigenvectors, smoothness of graph signals, etc., that are necessary for graph learning. To this end, the early work \cite{chandrasekaran2012latent} proposed to exploit the `low-rank+sparse' structure in the precision matrix of partially observed graph signal. Subsequent work such as \cite{buciulea2022learning} proposed a graph learning criterion using smoothness of graph signals, \cite{jalali2011learning} considered time-series data, \cite{matta2020graph} considered a linear influence model, and \cite{hendrickx2018identifiability, santos2023learning} focused on identifiability of network dynamical systems. Additionally, the authors have studied graph feature learning from partial observations, such as community \cite{wai2022partial}, central nodes \cite{he2023central}. In the above works, a common assumption made is that the graph signals are \emph{smooth}, or more generally, generated from a network process that can be modeled as exciting a \emph{low-pass graph filter} \cite{ramakrishna2020userguide}. 

While the \emph{low-pass graph filter} assumption can be motivated by modeling network processes from social-physical aspects (e.g., \cite{ramakrishna2020userguide}, \cite{degroot1974reaching}), the latter often requires prior knowledge on the given dataset. In the absence of prior knowledge or when the dataset is corrupted, applying GSP methods may lead to unexpected results. 
Under this context, it is natural to ask:
    \emph{Do we know if a dataset of partially observed graph signals is generated from a low-pass filter, without knowing the underlying graph beforehand?}
Addressing the question gives a certificate \emph{prior to} applying the mentioned methods on partially observed signals and guarantees reliable outcomes. 

Our plan is to build on the authors' prior work \cite{zhang2023detecting}, which tackled a similar detection problem but was focused on fully observed graph signals. Particularly, as the detection problem is ill-posed in general since smoothness/low-pass-ness are defined with respect to the graph itself, \cite{zhang2023detecting} focuses on a simplified case where the graph is known to be modular \cite{girvan2002community}, a common feature for graphs found in networked systems. It then derives a detector based on the clusterizability of principal components, i.e., spectral pattern, for observed graph signals. 

For partially observed graph signals, the challenge lies on how to account for the effect of missing nodes on the \emph{observed} spectral pattern. To this end, our contributions are: 
\begin{itemize}[leftmargin=*]
    \item We show that the $K$-means score detector in \cite{zhang2023detecting} can correctly detect the spectral pattern of \emph{partially observed} low-pass graph signals. Though the latter also exhibit spectral pattern that distinguishes itself from any non-low-pass signals, we prove that the sampling complexity critically depends on the number of observed nodes. 
    \item We demonstrate that the proposed detector can be used as a pre-screening procedure to robustify community detection from partially observed graph signals.
\end{itemize}
The rest of this paper is structured as follows. Section~\ref{sec:problem} describes the partial observation setting and formulates the blind detection problem. Section~\ref{sec:method} develops the proposed method and reports its sample complexity. Finally, Section~\ref{sec:exp} presents results from preliminary numerical experiments. 

\textbf{Notations}. We use $||\cdot ||_2$ to denote spectral norm for matrices and Euclidean norm for vectors, and $||\cdot||_{\rm F}$ to denote Frobenius matrix norm. 
For a symmetric matrix $\matr{X}$, $\lambda_{i}(\matr{X})$ denotes the $i^{th}$ smallest eigenvalue of a matrix.

\section{Problem Statement} \label{sec:problem}

Consider an undirected, connected $N$-node graph $G = (V, E)$ where $V = \{1, ..., N\}$ and $E \subseteq V \times V$. The graph can be represented as an adjacency matrix $\matr{A} \in \{0, 1\}^{N \times N}$, a Laplacian matrix $\matr{L} = \matr{D} - \matr{A}$  where $\matr{D} = \diag(\matr{A}\matr{1})$, or a normalized Laplacian matrix $\matr{L}_{\text{norm}} = \matr{I} - \matr{D}^{-1/2}\matr{A}\matr{D}^{-1/2}$. The aforementioned matrix representations of $G$ qualify as graph shift operators (GSO), which are any symmetric matrix $\matr{S} \in \mathbb{R}^{N \times N}$ such that $\matr{S}_{ij} \neq 0$ only if $(i, j) \in E$. A GSO admits an eigendecomposition $\matr{S} = \matr{V}\matr{\Lambda}\matr{V}^{\top}$, where the columns of $\matr{V} = [ \matr{v}_1, \cdots, \matr{v}_N ]$ are the orthonormal eigenvectors associated with eigenvalues sorted in ascending order, and $\matr{\Lambda}$ is a diagonal matrix of eigenvalues, also known as the graph frequencies. We focus on the case of $\matr{S} = \matr{L}_{\rm norm}$. 

The graph filter is defined as a polynomial of the GSO:
\begin{align} \label{eq:filter} \textstyle
    \mathcal{H}(\matr{S}) = \sum_{t = 0}^{T}h_t\matr{S}^{t} = \matr{V}h(\matr{\Lambda})\matr{V}^{\top},
\end{align}
where $\{ h_t \}_{t = 0}^{T-1}$ are the filter coefficients. The latter also defines the frequency response function: $h(\lambda) = \sum_{t = 0}^{T}h_t\lambda^{t}$ and $h(\matr{\Lambda}) = \diag(h(\lambda_1), ..., h(\lambda_n))$. {For simplicity, we assume the magnitudes of frequency responses to be distinct, i.e. $|h(\lambda_i)| \neq |h(\lambda_j)|$ for all $i \neq j$.} By sorting the magnitude of frequency responses in descending order as $|h_1| > ... > |h_N|$, the graph filter operator can be written as $\mathcal{H}(\matr{S}) = \matr{U}\matr{h}\matr{U}^{\top}$, where $\matr{h} = \diag(h_1, ..., h_N)$ and $\matr{U}$ is accordingly the column re-ordered version of $\matr{V}$. 

A graph signal on $G$ is represented as a $N$-dimensional vector that is the output of a graph filter \eqref{eq:filter}:
\begin{align} \label{eq:network_process} \textstyle
    \matr{y} = \mathcal{H}(\matr{S})\matr{x} + \matr{w}.
\end{align}
The $i$th element of $\matr{y}$ denotes the signal on node $i \in V$,
subjected to the excitation graph signal $\matr{x} \in \mathbb{R}^{N}$, and $\matr{w} \in \mathbb{R}^{n}$ is a zero-mean white noise with $\mathbb{E}[\matr{w}\matr{w}^{\top}] = \sigma^2 \matr{I}$ for $\sigma^2 \geq 0$. The observed signal $\matr{y}$ is assumed to be stationary, i.e., $\mathbb{E}[\matr{x}] = \matr{0}$ and $\mathbb{E}[\matr{x}\matr{x}^{\top}] = \matr{I}$ \cite{perraudin2017stationary}, \cite{marques2017stationary} for simplicity; however, our analysis can be extended to the non-stationary case of $\mathbb{E}[\matr{x}\matr{x}^{\top}] \neq \matr{I}$.

The graph signal $\matr{y}$ in \eqref{eq:network_process} can also be modeled as the output of a network process. Prior works in GSP have suggested to categorize network process according to their frequency response. Among others, an important class of graph filters is the low-pass graph filters \cite{sandryhaila2013discrete,  ramakrishna2020userguide}, which is defined by:
\begin{Def} \label{def:lpf}
A graph filter $\mathcal{H}(\cdot)$ is said to be $K$-low-pass if
\begin{align}
    \eta_{K} := \frac{\max_{i = K+1, ..., N}|h(\lambda_i)|}{\min_{i = 1, ..., K}|h(\lambda_i)|} < 1, \label{eq:lowpassK}
\end{align}
where $K$ is cut-off frequency, and $\eta_{K}$ is the sharpness of $\mathcal{H}$.
\end{Def}
\noindent From the definition, a low-pass graph filter retains (resp.~attenuates) the energy of the excitation graph signal at low (resp.~high) frequencies. A graph signal is said to be low-pass if it is the output of a low-pass graph filter.


We further consider the scenario when the graph signals in \eqref{eq:network_process} can only be partially observed. Without loss of generality, we assume that the first $n$ nodes are observed and denote 
\begin{equation} \label{eq:yo}
    \matr{y}_o = [\matr{I}_{n \times n}, \matr{0}_{n \times (N-n)}] \, \matr{y} =: \matr{E}_{o}\matr{y} .
\end{equation}
As mentioned in the Introduction, the application of GSP on partial observations has gained popularity as the model arises naturally for large graphs where it is difficult to obtain observations on every nodes. 
Under this context, GSP applications such as graph learning \cite{buciulea2022learning}, community detection \cite{wai2022partial} have exploited the \emph{smoothness} property and motivate the latter by modeling the graph signal observations as \emph{low pass signals}.


We depart from the above works and inquire if the \emph{smoothness} property is valid for a given dataset. This leads to the blind low-pass graph filter detection problem:
\begin{prob} \label{prob:detect}
Given the parameter $K$ and a set of partially observed graph signals [cf.~\eqref{eq:network_process}, \eqref{eq:yo}], determine if the underlying graph filter is $K$-low-pass or not [cf.~Definition~\ref{def:lpf}]. We denote the null hypothesis ${\cal T}_0$ (resp.~alternative hypothesis ${\cal T}_1$) as `${\cal H}({\bf S})$ is (resp.~not) $K$-low-pass'.
\end{prob}
\noindent Notice that Problem~\ref{prob:detect} serves as a data-driven certificate to the successful applications of the prior GSP works. 

There are two challenges in solving Problem~\ref{prob:detect}: (i) the graph topology or the GSO ${\bf S}$ is unknown, (ii) the graph signals are partially observed where ${\bf E}_o$ is unknown. Either challenge has made it impossible to verify Definition~\ref{def:lpf} directly.
Our prior work \cite{zhang2023detecting} proposed to narrow down the detection problem w.r.t.~arbitrary graphs to the class of $K$-modular graphs \cite{girvan2002community} with $K$ densely connected components. The number of densely connected components naturally determines the parameter $K$ for the low-pass filter. It then exploits the spectral pattern of graphs to formulate a $K$-means score detector\footnote{We remark that when $K=1$, i.e., the graph contains only one dense component, applying the Perron Frobenius theorem \cite{he2021lowpass} suffices to detect the $1$-low-pass graph signals. Here, we shall focus on the case of $K \geq 2$.}. The detector is proven to produce accurate result under mild assumptions on the noise statistics and graph filter properties.

This paper aims to extend the aforementioned detector in \cite{zhang2023detecting} to the partial observation context. Interestingly, we show that the $K$-means score detector is still robust in this scenario, whose performance loss depends naturally with the ratio $n/N$.

\section{Low-pass Detection with Partial Observations}\label{sec:method}
This section develops a detector for Problem~\ref{prob:detect} under the partial observation setting. Our development begins by investigating the \emph{covariance matrix} of partially observed signals:
\begin{equation}
\begin{split}
    \matr{C}_{o} = \mathbb{E}[\matr{y}_{o, m}\matr{y}_{o, m}^{\top}] & = \matr{V}_{o}h(\matr{\Lambda})^2\matr{V}_{o}^{\top} + \sigma^2\matr{I}\\
    & = \matr{U}_{o} {\bf h}^2 \matr{U}_{o}^{\top} + \sigma^2\matr{I},
\end{split}
\end{equation}
where we have used $\matr{y}_{o, m}$ to denote the $m$th realization of the partially observed signal in \eqref{eq:yo}. We have defined the \emph{row-sampled} eigenmatrices $\matr{V}_o = \matr{E}_o \matr{V}, \matr{U}_o = \matr{E}_o \matr{U} \in \mathbb{R}^{n \times N}$. The noiseless covariance is $\overline{\matr{C}}_{o} = \matr{C}_{o} - \sigma^2 \matr{I} = \matr{U}_{o} {\bf h}^2 \matr{U}_{o}^{\top}$.

Following the insight from \cite{wai2022partial}, we note that when the graph filter has a \emph{sharp} cut-off (e.g., $\eta_K \ll 1$ under ${\cal T}_0$), the following approximation holds
\begin{equation} \label{eq:approx_Co}
    \overline{\matr{C}}_{o} \approx \matr{U}_{o,K} {\bf h}_K^2 \matr{U}_{o,K}^{\top},
\end{equation}
where $\matr{U}_{o,K}$ takes the $K$ left-most column vectors from $\matr{U}_o$. Under ${\cal T}_0$, the matrix $\matr{U}_{o,K}$ corresponds to the row-sampled and column permuted version of $\matr{V}_{K} = [\matr{v}_1, ..., \matr{v}_{K}]$. To this end, the \emph{row vectors} of $\matr{V}_K$ are clusterizable when $G$ is $K$-modular.
Meanwhile, under ${\cal T}_1$ when the graph filter is \emph{not} $K$-low-pass, ${\bf U}_{o,K}$ corresponds to the row sampled versions of the bulk eigenvectors $\{\matr{v}_{K+1}, ..., \matr{v}_{N}\}$ which are not clusterizable \cite{vonLuxburg2007sc}. Together, they motivate the following $K$-means score: for any $\matr{N} \in \mathbb{R}^{N \times K}$, we denote
\begin{equation}
\begin{split}
    \mathbb{K}^{*}(\matr{N}) & \textstyle := \min_{\mathcal{C}} \, \mathbb{K}(\matr{N}, \mathcal{C}), \\
    \mathbb{K}(\matr{N}, \mathcal{C}) & \textstyle := \sum_{k=1}^{K}\sum_{i \in \mathcal{C}_k}||\matr{n}^{\rm row}_i - \frac{1}{|\mathcal{C}_{k}|}\sum_{j \in \mathcal{C}_{k}}\matr{n}^{\rm row}_{j}||^2_2,
\end{split}
\end{equation}
where ${\cal C} = \{ {\cal C}_1, \ldots, {\cal C}_K \}$ is a set of non-overlapping partition for $\{1,...,N\}$ and $\matr{n}^{\rm row}_i$ denotes the $i$th row vector of $\matr{N}$.

Define the sampled covariance matrix $\widehat{ \matr{C} }_o : = (1/M)\sum_{m=1}^{M}\matr{y}_{o, m}\matr{y}_{o, m}^{\top}$ and its top-$K$ eigenvectors are stacked up as $\widehat{\matr{Q}}_K$.
Following the insights from \cite{wai2022partial, zhang2023detecting} and observe that $\widehat{\matr{Q}}_K \approx \matr{U}_{o,K}$ when $n$ is sufficiently close to $N$, we propose to tackle Problem~\ref{prob:detect} by detecting ${\cal T}_0$/${\cal T}_1$ based on $\mathbb{K}^*( \widehat{\bf Q}_K )$. From the above discussions, $\mathbb{K}^*( \widehat{\bf Q}_K )$ will be small (resp.~large) when the graph filter is (resp.~not) $K$-low-pass. This motivates the proposed detector in Algorithm~\ref{alg:detector}.


\begin{algorithm}[t]
\caption{Low-pass Detection with Partial Observations} \label{alg:detector}
\begin{algorithmic}[1]
\STATE \textbf{Input:} Partially observed graph signals $\{\matr{y}_{o, m}\}_{m=1}^{M}$, no. of clusters $K \geq 2$, detection threshold $\delta > 0$.
\STATE Calculate $\widehat{\matr{C}}_{o} := (1/M)\sum_{m=1}^{M}\matr{y}_{o, m}\matr{y}_{o, m}^{\top}$.
\STATE Compute the top-$K$ eigenvectors $\widehat{\matr{Q}}_{K} \in \mathbb{R}^{n \times K}$ of $\widehat{\matr{C}}_{o}$.
\STATE \textbf{Output:} $\widehat{\mathcal{T}} = {\cal T}_0$ if $\mathbb{K}^{*}(\widehat{\matr{Q}}_K) < \delta$; or $\widehat{\mathcal{T}} = {\cal T}_1$ otherwise.
\end{algorithmic} 
\end{algorithm}


\subsection{Performance Analysis and Theoretical Insights}
We next present the analysis on the finite-sample performance of Algorithm~\ref{alg:detector}. In addition to verifying the correctness of the detector, our analysis shall demonstrate the favorable conditions where the detector is effective. Note that there are multiple sources of error that need to be controlled carefully. For instance, the approximation in \eqref{eq:approx_Co} is not exact, the columns of $\matr{U}_{o,K}$ are not orthogonal, etc. 

To set up the analysis, we require the following condition on spectral gap of the covariance matrix:
\begin{assumption} \label{assu:spectral_gap}
With probability at least $1-\delta_{\rm gap}$, there exists $\rho_{\rm gap}$ such that $\lambda_{n-K-1}(\overline{\matr{C}}_{o}) - \lambda_{n-K}(\overline{\matr{C}}_{o}) - ||\widehat{\matr{C}}_{o} -\overline{\matr{C}}_{o}||_{2} \geq \rho_{\rm gap} > 0$.
\end{assumption}
\noindent The above can be satisfied when $\widehat{\matr{C}}_{o}$ is sufficiently close to $\overline{\matr{C}}_{o}$, e.g., when sufficient number of samples are observed and the noise level $\sigma^2$ is small, and the noiseless covariance $\overline{\matr{C}}_o$ is approximately rank $K$. We also let: 
\begin{assumption} \label{assu:sharp}
The graph filter ${\cal H}( \matr{S} )$ is at least $\eta$-sharp and $\gamma$-flat:
\begin{equation}
\frac{ \max_{i=K+1,...,N} |h_i| }{ \min_{i=1,...,K} |h_i| } \leq \eta < 1, ~~ \frac{\max_{1 \leq i \leq K}h_i^2}{\min_{1 \leq j \leq K}h_j^2} \leq \gamma.
\end{equation}
\end{assumption}
\noindent The above specifies the class of graph filters that we detect. Notice if the graph filter is $K$-low-pass, then the above $\eta$ takes the same role as $\eta_K$ in \eqref{eq:lowpassK}. 

As mentioned in the previous section, the proposed detector relies on the clusterizability of the top-$K$ eigenvectors for the normalized Laplacian in $K$-modular graphs. To obtain theoretical insights, we assume that the full graph $G$ is generated from the stochastic block model (\text{SBM}) with: 
\begin{assumption} \label{assu:sbm}
We have $G \sim \text{SBM}( N, K , r, p )$ with $p \geq r > 0$, $p/K + r \geq (32 \log N + 1)/N$. 
\end{assumption}
\noindent By $G \sim \text{SBM}( N, K , r, p )$, we denote a random graph with $N$ nodes equally partitioned into $K$ blocks, described by a membership matrix $\mathbf{Z} \in \{0, 1\}^{N \times K}$ such that $\mathbf{Z}_{ij} = 1$ if and only if node $i$ is in block $j$, and a connectivity matrix $\mathbf{B} \in [0, 1]^{K \times K}$, whose entries $\matr{B}_{ij}$ being the probability of edges between nodes in block $i$ and block $j$. The parameters $r,p$ describes the connectivity such that $\matr{B} = p\matr{I} + r\matr{1}\matr{1}^{\top}$.
We also assume the following on the \emph{bulk eigenvectors} of $\matr{L}_{\rm norm}$:
\begin{assumption} \label{assu:csbm}
With probability at least $1-\delta_{\rm \text{SBM}}$, there exists $c_{\text{\text{SBM}}} > 0$ independent of $N,r,p$ with $\min_{ l=K+1,...,N} \mathbb{K}^{*}(\matr{v}_{l}) \geq c_{\text{\text{SBM}}}$.
\end{assumption}
\noindent Note that H\ref{assu:csbm} is observed for $G \sim \text{SBM}(N,K,r,p)$ empirically \cite{zhang2023detecting}, yet it remains an open conjecture to be verified theoretically.
With H\ref{assu:sbm}, H\ref{assu:csbm}, it is easy to deduce that $\mathbb{K}^{*}(\matr{V}_{K}) = {\cal O}( \log N / N )$ \cite{rohe2011sbm}, while the $K$-means score for the bulk eigenvectors is bounded away by $c_{\text{\text{SBM}}} > 0$.

Let ${\cal T}_{\rm gnd} \in \{ {\cal T}_0, {\cal T}_1 \}$ be the ground truth hypothesis. 
Our main analytical result is summarized below:
\begin{Theorem} \label{th:main}
Under H\ref{assu:spectral_gap}, H\ref{assu:sharp}, H\ref{assu:sbm}, H\ref{assu:csbm}.
Suppose that the following threshold-dependent term satisfies
\begin{align*}
    \Tilde{\delta}_{\rm min} := \min\Bigg\{&\delta - \sqrt{\frac{N}{n}}\sqrt{\frac{1225K^3 \log N}{p(N-K)}},\\&\sqrt{\frac{N}{n}}\sqrt{c_{\rm \text{SBM}} - \frac{2450 K^3 \log N}{p(N-K)}} - \delta\Bigg\} > 0,
\end{align*}
and that 
\[ 
\tilde{\sigma} := \frac{ \rho_{\rm gap} (\Tilde{\delta}_{\rm min} - \sqrt{K} ( ||\matr{I} - \matr{R}_{K}||_{\rm 2} + 6 \gamma \eta ) )  }{ 2\sqrt{K} } - \sigma^2 > 0,
\]
where $\matr{R}_K \in \mathbb{R}^{K \times K}$ is an upper triangular matrix in the QR factorization of $\matr{U}_{o,K} = \sqrt{N/n} \matr{Q}_K \matr{R}_K$.
If the number of samples $M$ satisfies
\begin{align} \label{eq:sample_no}
    &\sqrt{\frac{M}{\log M}} \geq \frac{\sqrt{2}c_1 \tr(\overline{\matr{C}}_{o})}{ \tilde{\sigma} }
\end{align}
where $c_1$ is a constant independent of $N,M$, then we have
\begin{align}
    \mathbb{P}(\widehat{\mathcal{T}} = \mathcal{T}_{\rm gnd}) \geq 1 - 4/N - 5/M - \delta_{\rm gap} - \delta_{\rm \text{SBM}}.
\end{align}
\end{Theorem}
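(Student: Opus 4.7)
The plan is to reduce correctness of Algorithm~\ref{alg:detector} to a deterministic sandwich of the form
\[
\bigl|\sqrt{\mathbb{K}^{*}(\widehat{\matr{Q}}_K)} - \sqrt{\mathbb{K}^{*}(\sqrt{N/n}\,\matr{V}^{\pi}_K)}\bigr| \leq \mathrm{err},
\]
holding on a high-probability event, where $\matr{V}^{\pi}_K$ collects the $K$ columns of $\matr{V}$ aligned with those of $\matr{U}$ (the first $K$ under $\mathcal{T}_0$; a $K$-subset of the bulk under $\mathcal{T}_1$). Lei--Rinaldo-type SBM eigenvector concentration together with H\ref{assu:sbm} gives $\mathbb{K}^{*}(\matr{V}_K) \lesssim K^3\log N/(p(N-K))$ with failure probability $\leq 4/N$, producing the first entry of $\tilde{\delta}_{\rm min}$ under $\mathcal{T}_0$. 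Under $\mathcal{T}_1$ the same concentration combined with H\ref{assu:csbm} propagates the per-vector bulk bound $\mathbb{K}^{*}(\matr{v}_\ell)\geq c_{\rm SBM}$ to the joint bound $\mathbb{K}^{*}(\matr{V}^\pi_K) \geq c_{\rm SBM} - \mathcal{O}(K^3\log N/(p(N-K)))$ via a centroid-perturbation argument, giving the second entry. The $\sqrt{N/n}$ prefactor is the natural scaling: $\sqrt{N/n}\,\matr{U}_{o,K}$ is, after column reordering, $\sqrt{N/n}$ times a uniformly row-sub-sampled copy of $\matr{V}_K$, and rescaling rows by a constant multiplies $\sqrt{\mathbb{K}^{*}}$ by the same constant.

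\textbf{Perturbation chain.} I assemble $\mathrm{err}$ from four classical ingredients. (i) The variational form of the $K$-means objective gives the Lipschitz-type inequality $\bigl|\sqrt{\mathbb{K}^{*}(\matr{N}_1)}-\sqrt{\mathbb{K}^{*}(\matr{N}_2)}\bigr| \leq \sqrt{K}\,\|\matr{N}_1-\matr{N}_2\|_2$, translating every matrix comparison into an operator-norm distance and explaining the leading $\sqrt{K}$ in $\tilde{\sigma}$. (ii) A Davis--Kahan $\sin\Theta$ step under H\ref{assu:spectral_gap} produces an orthogonal $\matr{O}$ with $\|\widehat{\matr{Q}}_K - \matr{Q}_K^{(0)}\matr{O}\|_2 \lesssim \|\widehat{\matr{C}}_o - \overline{\matr{C}}_o\|_2/\rho_{\rm gap}$, where $\matr{Q}_K^{(0)}$ denotes the top-$K$ eigenvectors of $\overline{\matr{C}}_o$. (iii) A sample-covariance concentration $\|\widehat{\matr{C}}_o - \matr{C}_o\|_2 \leq c_1\tr(\overline{\matr{C}}_o)\sqrt{\log M/M}$ (matrix Bernstein/Chebyshev for empirical covariances, contributing the $5/M$ tail), combined with $\|\matr{C}_o - \overline{\matr{C}}_o\|_2 = \sigma^2$ and the sample-complexity assumption \eqref{eq:sample_no}, forces the Davis--Kahan numerator to stay below $\rho_{\rm gap}\tilde{\sigma}/(2\sqrt{K})+\sigma^2$. (iv) Splitting $\overline{\matr{C}}_o = \matr{U}_{o,K}\matr{h}_K^2\matr{U}_{o,K}^\top + \matr{U}_{o,\bar K}\matr{h}_{\bar K}^2\matr{U}_{o,\bar K}^\top$, the tail summand has operator norm at most $\gamma\eta$ times that of the leading one by H\ref{assu:sharp}; a second Davis--Kahan step compares $\matr{Q}_K^{(0)}$ to the QR factor $\matr{Q}_K$ of $\matr{U}_{o,K}=\sqrt{N/n}\,\matr{Q}_K\matr{R}_K$ with error $\lesssim 6\gamma\eta$, and $\|\matr{Q}_K - \sqrt{n/N}\,\matr{U}_{o,K}\|_2 \leq \|\matr{I}-\matr{R}_K\|_2$ closes the gap to the ideal row-sampled eigenmatrix.

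\textbf{Assembly and main obstacle.} Chaining these bounds, under $\mathcal{T}_0$ one obtains $\sqrt{\mathbb{K}^{*}(\widehat{\matr{Q}}_K)} < \sqrt{\delta}$ and under $\mathcal{T}_1$ the reversed inequality, so $\widehat{\mathcal{T}} = \mathcal{T}_{\rm gnd}$ on the intersection of the four good events; a union bound on their complements yields $1 - 4/N - 5/M - \delta_{\rm gap} - \delta_{\rm SBM}$. I expect the crux to be step (iv): $\matr{U}_{o,K}$ is not orthonormal (its Gram matrix $\matr{R}_K^\top\matr{R}_K$ is only close to $(n/N)\matr{I}$ due to the random row-selection), and $\overline{\matr{C}}_o$ is only approximately rank $K$ because the bulk tail is merely suppressed by $\gamma\eta$ rather than being zero. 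A naive Davis--Kahan against a rank-$K$ target therefore fails, and one must couple the $\eta$-sharpness with the $\gamma$-flatness bound and simultaneously track the non-orthogonality encoded in $\matr{R}_K$ to recover the exact constants $6$ and $2450$ appearing in the theorem.
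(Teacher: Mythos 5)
Your proposal is correct and follows essentially the same route as the paper's proof: the error is decomposed into the same four pieces (Rohe-type SBM eigenvector concentration giving the $1225$/$2450$ terms under $\mathcal{T}_0$/$\mathcal{T}_1$, the non-orthogonality term $\sqrt{K}\|\matr{I}-\matr{R}_K\|_2$ from the QR factor, the filter-tail term $6\gamma\eta$ via a Davis--Kahan-type projection bound under H\ref{assu:spectral_gap}, and the covariance concentration term $c_1\tr(\overline{\matr{C}}_o)\sqrt{\log M/M}+\sigma^2$), assembled with the same union bound. The only cosmetic difference is that the paper compares $K$-means scores through the indicator-matrix projections $\matr{X}^*(\matr{X}^*)^\top$ and $\|\matr{Q}_K\matr{Q}_K^\top-\widehat{\matr{Q}}_K\widehat{\matr{Q}}_K^\top\|_{\rm F}$ (sidestepping the rotation ambiguity) rather than your explicit Lipschitz-plus-alignment formulation, which is equivalent.
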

\noindent The above result considers randomnesses in the graph signals generation \eqref{eq:network_process} and the \text{SBM} graph properties in H\ref{assu:sbm}, H\ref{assu:csbm}.

The theorem asserts that when $\Tilde{\delta}_{\rm min} > 0$, $\tilde{\sigma} > 0$, then with a sufficiently large number of samples, Algorithm~\ref{alg:detector} will return a correct detection with high probability as $N,M \to \infty$. To satisfy $\Tilde{\delta}_{\rm min} > 0$, as $c_{\rm \text{SBM}} = \Theta(1)$, the requirement can be fulfilled with $\delta = \Theta( \sqrt{N/n} )$. Furthermore, to satisfy $\tilde{\sigma} > 0$, we require two criterion: (i) the noise level $\sigma^2$ is sufficiently small, (ii) the filter constant $\gamma \eta$, and the factor $\| \matr{I} - \matr{R}_K \|_2$ are smaller than ${\cal O}( \tilde{\delta}_{\rm min} )$. Note that $\| \matr{I} - \matr{R}_K \|_2$ decreases to $0$ as $n \to N$; see Fig.~\ref{fig:Rk} for illustration.
Finally, we note that the sample complexity, i.e., minimal $M$ needed to satisfy \eqref{eq:sample_no}, is proportional to $\tilde{\sigma}^{-1}$. From the above discussions, $\tilde{\sigma}^{-1}$ is reduced when the graph filters to be detected are \emph{sharp and flat}, i.e., $\eta \ll 1, \gamma \approx 1$, and the number of observed nodes is large enough $n \to N$.
\pgfplotsset{every tick label/.append style={font=\small}}
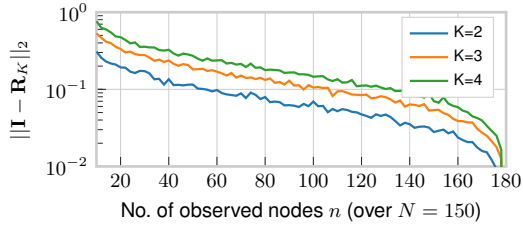
\begin{figure}[t]
\centering
  {\sf \resizebox{.8\linewidth}{!}{
\begin{tikzpicture}

\definecolor{darkgray176}{RGB}{176,176,176}
\definecolor{darkorange25512714}{RGB}{255,127,14}
\definecolor{forestgreen4416044}{RGB}{44,160,44}
\definecolor{steelblue31119180}{RGB}{31,119,180}

\begin{axis}[
axis line style={white!80!black},
legend cell align={left},
legend style={
  fill opacity=0.7,
  draw opacity=0.9,
  text opacity=1.0,
  legend pos = north east,
  draw=white!80!black,
  font = \normalsize
},
log basis y={10},
tick align=inside,
x grid style={white!80!black},
xlabel={\small No. of observed nodes \(\displaystyle n\) (over $N=150$)},
xmajorgrids,
xmajorticks=true,
xmin=10, xmax=180,
xminorgrids,
xtick style={color=white!15!black},
xtick pos=left,
y grid style={white!80!black},
ylabel={\small \(\displaystyle{||\matr{I}-\matr{R}_{K}||_{\rm 2}}\)},
ymajorgrids,
ymajorticks=true,
ymin=0.01, ymax=1,
ymode=log,
ytick style={color=white!15!black},
yticklabel style={font=\small},
ytick pos=left,
width = 8cm,
height = 4cm,
axis line style = very thick
]
\addlegendentry{\scriptsize K=2}
\addplot [line width=1pt, steelblue31119180]
table {%
10 0.308891926695174
12 0.256429318330918
14 0.22909784507943
16 0.216139048449805
18 0.209374296693221
20 0.192151372810452
22 0.190994409690096
24 0.169617892445037
26 0.162781073518169
28 0.173529118204397
30 0.172043431103187
32 0.160789721897486
34 0.134892746783567
36 0.136775436380059
38 0.115227807991333
40 0.134989315635681
42 0.113505961188883
44 0.114390478000555
46 0.112529267582043
48 0.11474854333867
50 0.115362292246973
52 0.102212031696105
54 0.1013132760754
56 0.098747538225892
58 0.0941024633920517
60 0.0978154398547977
62 0.0895081839487836
64 0.0860044546816199
66 0.0843553679123752
68 0.0825420551179798
70 0.0886094753840178
72 0.0770846595211134
74 0.0905522576616397
76 0.0849368494620757
78 0.0746668296111455
80 0.0794361744126487
82 0.0681353500473242
84 0.0692885404261315
86 0.0682432675444049
88 0.0619850392415309
90 0.0623305122828141
92 0.061349314295068
94 0.0589761444324206
96 0.0646715072168334
98 0.0604892542302265
100 0.0694018598308289
102 0.0603520516470166
104 0.0609812859410818
106 0.0504992344587665
108 0.0556401819659879
110 0.0520141841082774
112 0.0553389997314319
114 0.053990677617722
116 0.0510747265869098
118 0.0502316336662456
120 0.0472146542483131
122 0.04518143677533
124 0.0426908464918859
126 0.0473809761649008
128 0.0451780559073666
130 0.0470953612317695
132 0.0437750307442552
134 0.0387730232025651
136 0.0340535953594979
138 0.0368042606457045
140 0.0353355453918993
142 0.0316932760808116
144 0.0327189286670382
146 0.0329742573734709
148 0.0318790909895483
150 0.0279307763546475
152 0.0290389171670087
154 0.0300086573936117
156 0.0269055325197947
158 0.0283454259925556
160 0.0237460460731167
162 0.0219542096749471
164 0.0206979901379143
166 0.0206715397029071
168 0.0176434940783318
170 0.0158599149353179
172 0.015257554010854
174 0.0121491328974469
176 0.00924812254636931
178 0.00734636127463376
180 8.78833948765746e-16
};
\addlegendentry{\scriptsize K=3}
\addplot [line width=1pt, darkorange25512714]
table {%
10 0.530629702502707
12 0.465647810420656
14 0.414668444757654
16 0.396932299867497
18 0.3556413117517
20 0.333232283698591
22 0.299933099853205
24 0.308761585497591
26 0.286158457220426
28 0.278372054104404
30 0.278455370256637
32 0.257361113263585
34 0.234733762875376
36 0.242861218523232
38 0.226291880592787
40 0.236972565117801
42 0.21393013248176
44 0.202158886501352
46 0.212039665260163
48 0.206357469086636
50 0.181197726760679
52 0.182276634181086
54 0.18461169366048
56 0.173138017521982
58 0.176687320887873
60 0.169436268445571
62 0.161220499591735
64 0.151045810482958
66 0.162738574397149
68 0.141859812671611
70 0.150564148272368
72 0.140530554020812
74 0.142509087835117
76 0.135928458897013
78 0.140954835855142
80 0.132371436216715
82 0.138532033361108
84 0.121516346608241
86 0.124325059832205
88 0.115433334039212
90 0.129457378153308
92 0.111833491143949
94 0.113125270071552
96 0.0995187469993611
98 0.113534615862364
100 0.10520794450214
102 0.107031047672158
104 0.103999368802002
106 0.104653462415196
108 0.0818261846029172
110 0.0940903341997531
112 0.0948397094357302
114 0.0903211888070843
116 0.0856264458768559
118 0.0849701696923897
120 0.084175309244669
122 0.0854554547358712
124 0.0765742928869685
126 0.0778312429671969
128 0.0813554594220815
130 0.0791246852292083
132 0.0740581617116876
134 0.0672826804415078
136 0.0715716041807873
138 0.0594939408066062
140 0.0637356160476083
142 0.0627462362450512
144 0.0608841141294805
146 0.0642753677019051
148 0.0529786594567288
150 0.0537690183997307
152 0.0542698300156789
154 0.0490793801983022
156 0.0468099576879048
158 0.0415158716677072
160 0.0388403209210503
162 0.0385632497606956
164 0.0372724229548882
166 0.0370977937761885
168 0.0308749668105716
170 0.0273366555259397
172 0.024972633480094
174 0.021210546125844
176 0.0178752219232412
178 0.0125557755149569
180 9.54210563596271e-16
};
\addlegendentry{\scriptsize K=4}
\addplot [line width=1pt, forestgreen4416044]
table {%
10 0.764523832459012
12 0.627645919130391
14 0.597813564801574
16 0.545691694272978
18 0.485192651506812
20 0.474126103410377
22 0.436440244939905
24 0.394830908314011
26 0.405582293484858
28 0.354932196613684
30 0.356627335315138
32 0.349653892674259
34 0.348889110321813
36 0.323337069500384
38 0.31789817244335
40 0.325061126374997
42 0.291453865244635
44 0.273518749234629
46 0.290522253000609
48 0.283100911762874
50 0.271904169605786
52 0.250336644689226
54 0.234815314162288
56 0.251332523237899
58 0.235541770491906
60 0.237025015729013
62 0.23839566877415
64 0.21588473655318
66 0.220148805599988
68 0.209167470824348
70 0.213448561465776
72 0.194282285005723
74 0.215093527633032
76 0.190305177964701
78 0.192691238276142
80 0.177503688182906
82 0.180152159828181
84 0.178350921615781
86 0.175959169947223
88 0.163428908353976
90 0.165134666052363
92 0.165809979859497
94 0.156704943513989
96 0.156689459781445
98 0.153855197121743
100 0.145407501454105
102 0.147486172530255
104 0.147854744229954
106 0.134399147942599
108 0.128049306371372
110 0.135328121574964
112 0.122340126285046
114 0.123048056402496
116 0.132003520348321
118 0.10955535242927
120 0.111469465968158
122 0.112245966794442
124 0.107980376225672
126 0.111746741982281
128 0.105706230679147
130 0.100836477657352
132 0.103703778706382
134 0.0949268437503111
136 0.0953330774360373
138 0.0976418603586002
140 0.0912737467615029
142 0.0858218941715299
144 0.0814604070884501
146 0.078152573009878
148 0.0882477807491404
150 0.0702347256847954
152 0.0669395890574299
154 0.0787715738327235
156 0.0653819425994653
158 0.0610075049487659
160 0.0587888222126365
162 0.0529365398147702
164 0.0505469381943489
166 0.0446680394441303
168 0.0424562004121941
170 0.0394715575132871
172 0.0335290494943086
174 0.0288899313599297
176 0.0250480073809904
178 0.0164282186796378
180 9.21855332002574e-16
};
\end{axis}

\end{tikzpicture}}}\vspace{-.3cm}
  \caption{Monte-Carlo simulation of $||\matr{I} - \matr{R}_{K}||_{\rm 2}$ as $n \to N$, where the corresponding $\matr{U}_{o, K} = \matr{Q}_{K}\matr{R}_{K}$ is from $\matr{L}_{\rm norm}$ of a graph generated by $\text{\text{SBM}}(180, K, \log N/N, 4\log N/N)$, with $K \in \{2, 3, 4\}.$}\vspace{-.4cm} \label{fig:Rk}
\end{figure}
Lastly, we remark that the proof for Theorem~\ref{th:main} is adapted from our prior works \cite{zhang2023detecting, wai2022partial} which applied \cite{rohe2011sbm, bunea2015covariance}. It can be found in the online appendix\footnote{\url{https://www1.se.cuhk.edu.hk/~htwai/pdf/sam24-appendix.pdf}}.

\section{Numerical Experiments} \label{sec:exp}
This section presents numerical experiments to validate our findings. We first evaluate the direct detection performance in tackling Problem~\ref{prob:detect}, then we consider an application on robustifying the blind community detection method. 

\subsection{Detecting Low-pass Signals from Partial Observations}\label{sec:ex1}
We use synthetic data to evaluate the performance of our proposed detector in various settings. In the following experiment, the graph $G$ with $N=150$ nodes and $K=3$ blocks is generated according to H\ref{assu:sbm} such that $G \sim \text{\text{SBM}}(150, 3, \log N/N, 4 \log N/N)$. The full graph signals in \eqref{eq:network_process} are generated with $\matr{x} \in \mathbb{R}^{N} \sim \mathcal{N}(\matr{0}, \matr{I})$ and $\matr{w} \sim \mathcal{N}(\matr{0}, \sigma^2\matr{I})$ where $\sigma^2 = 10^{-2}$, then we select $n$ nodes uniformly at random to form the partial observations \eqref{eq:yo}. We benchmark Algorithm~\ref{alg:detector} in distinguishing signals generated by a low-pass filter $e^{-\tau \matr{L}_{\rm norm}}$ (null hypothesis $\mathcal{T}_{0}$) from signals by a non-low-pass filter $e^{\tau \matr{L}_{\rm norm}}$ (alternative hypothesis $\mathcal{T}_{1}$), where the sharpness of the filter $\eta$ decreases as $\tau > 0$ increases. The performance is measured by the area under ROC (AUROC) such that ${\sf AUROC}=1$ when the detection is perfect. 
Figure \ref{fig:DetectLP} reports the results from 1000 Monte-Carlo trials.

We observe that the performance improves as $n,M$ increases, as well as the sharpness parameter $\eta$ controlled by $\tau$. Moreover, Algorithm~\ref{alg:detector} delivers reliable performance (with ${\sf AUROC} \approx 1$) when $n \geq 100, M \geq 100$. This indicates that the spectral pattern of low-pass graph signals are significant enough despite that $1/3$ of the nodes are not observed and only $M \approx n$ samples are observed. The above observations coincide with our finite-sample analysis in Theorem~\ref{th:main}.

\pgfplotsset{every tick label/.append style={font=\small}}
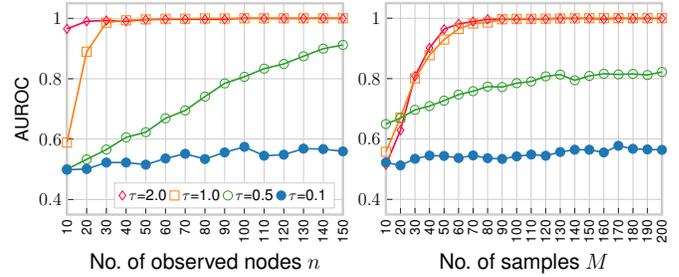
\begin{figure}[t]
\centering
  {\sf \resizebox{1.\linewidth}{!}{
\begin{tikzpicture}

\definecolor{darkgray176}{RGB}{176,176,176}
\definecolor{goldenrod1911910}{RGB}{191,191,0}
\definecolor{green01270}{RGB}{0,127,0}
\definecolor{color0}{rgb}{0.12156862745098,0.466666666666667,0.705882352941177}
\definecolor{color1}{rgb}{1,0.498039215686275,0.0549019607843137}
\definecolor{color2}{rgb}{0.172549019607843,0.627450980392157,0.172549019607843}
\definecolor{color3}{rgb}{1.0, 0.01, 0.24}



\begin{groupplot}[group style={group size=2 by 1}]
\nextgroupplot[
axis line style={white!80!black},
legend cell align={left},
legend style={
legend columns=-1,
  fill opacity=0.7,
  draw opacity=0.9,
  text opacity=1.0,
  legend pos = south east,
  draw=white!80!black,
  font = \normalsize
},
tick align=inside,
x grid style={white!80!black},
xlabel={\Large No. of observed nodes \(\displaystyle n\)},
xmajorgrids,
xmajorticks=true,
xmin=10, xmax=150,
xminorgrids,
xtick style={color=white!15!black},
xtick={10,20,30,40,50,60,70,80,90,100,110,120,130,140,150},
xticklabels={10,20,30,40,50,60,70,80,90,100,110,120,130,140,150},
xticklabel style={rotate=90},
xtick pos=left,
y grid style={white!80!black},
ylabel={\large AUROC},
ymajorgrids,
ymajorticks=true,
ymin=0.35, ymax=1.05,
yminorgrids,
ytick style={color=white!15!black},
yticklabel style={font=\large},
ytick pos=left,
width = 8cm,
height = 6.5cm,
axis line style = ultra thick
]
\addlegendentry{$\tau$=2.0}
\addplot [draw=color3, fill=color3, mark=diamond, only marks, mark size = 3]
table{%
x  y
10 0.965
20 0.9905
30 0.993
40 0.9905
50 0.9955
60 0.997
70 0.997
80 0.997
90 0.997
100 1
110 0.9995
120 0.9995
130 0.9995
140 1
150 0.9995
};

\addplot [line width=1.08pt, color3, forget plot]
table {%
10 0.965
20 0.9905
30 0.993
40 0.9905
50 0.9955
60 0.997
70 0.997
80 0.997
90 0.997
100 1
110 0.9995
120 0.9995
130 0.9995
140 1
150 0.9995
};

\addlegendentry{$\tau$=1.0}
\addplot [draw=color1, fill=color1, mark=square, only marks, mark size = 3]
table {%
10 0.5885
20 0.8895
30 0.9855
40 0.9935
50 0.9965
60 0.9985
70 0.9985
80 0.9995
90 0.9995
100 1
110 1
120 1
130 1
140 1
150 1
};

\addplot [line width=1.08pt, color1, forget plot]
table {%
10 0.5885
20 0.8895
30 0.9855
40 0.9935
50 0.9965
60 0.9985
70 0.9985
80 0.9995
90 0.9995
100 1
110 1
120 1
130 1
140 1
150 1
};

\addlegendentry{$\tau$=0.5}
\addplot [draw=color2, fill=color2, mark=o, only marks, mark size = 3]
table {%
10 0.5005
20 0.533
30 0.5655
40 0.606
50 0.623
60 0.669
70 0.695
80 0.741
90 0.7845
100 0.8065
110 0.8335
120 0.849
130 0.8745
140 0.8995
150 0.912
};

\addplot [line width=1.08pt, color2, forget plot]
table {%
10 0.5005
20 0.533
30 0.5655
40 0.606
50 0.623
60 0.669
70 0.695
80 0.741
90 0.7845
100 0.8065
110 0.8335
120 0.849
130 0.8745
140 0.8995
150 0.912
};

\addlegendentry{$\tau$=0.1}
\addplot [draw=color0, fill=color0, mark=*, only marks, mark size = 3]
table {%
10 0.4985
20 0.501
30 0.5225
40 0.5225
50 0.5155
60 0.536
70 0.5515
80 0.534
90 0.556
100 0.5745
110 0.545
120 0.5485
130 0.5685
140 0.567
150 0.5595
};

\addplot [line width=1.08pt, color0, forget plot]
table {%
10 0.4985
20 0.501
30 0.5225
40 0.5225
50 0.5155
60 0.536
70 0.5515
80 0.534
90 0.556
100 0.5745
110 0.545
120 0.5485
130 0.5685
140 0.567
150 0.5595
};

\nextgroupplot[
axis line style={white!80!black},
legend cell align={left},
legend style={
  fill opacity=0.7,
  draw opacity=0.9,
  text opacity=1.0,
  at={(0.75,0.05)},
  anchor=south west,
  draw=white!80!black,
  font = \normalsize
},
tick align=inside,
x grid style={white!80!black},
xlabel={\Large No. of samples \(\displaystyle M\)},
xmajorgrids,
xmajorticks=true,
xmin=10, xmax=200,
xminorgrids,
xtick style={color=white!15!black},
xtick={10,20,30,40,50,60,70,80,90,100,110,120,130,140,150,160,170,180,190,200},
xticklabels={10,20,30,40,50,60,70,80,90,100,110,120,130,140,150,160,170,180,190,200},
xticklabel style={rotate=90},
xtick pos=left,
y grid style={white!80!black},
ymajorgrids,
ymajorticks=true,
ymin=0.35, ymax=1.05,
yminorgrids,
ytick style={color=white!15!black},
yticklabel style={font=\large},
ytick pos=left,
width = 8cm,
height = 6.5cm,
axis line style = ultra thick
]
\addplot [draw=color3, fill=color3, mark=diamond, only marks, mark size = 3]
table {%
10 0.5145
20 0.629
30 0.8085
40 0.9015
50 0.9635
60 0.9815
70 0.989
80 0.9965
90 0.9965
100 0.9965
110 0.9975
120 0.9975
130 0.9995
140 0.999
150 0.999
160 0.997
170 0.9995
180 0.9985
190 0.999
200 0.999
};

\addplot [line width=1.08pt, color3, forget plot]
table {%
10 0.5145
20 0.629
30 0.8085
40 0.9015
50 0.9635
60 0.9815
70 0.989
80 0.9965
90 0.9965
100 0.9965
110 0.9975
120 0.9975
130 0.9995
140 0.999
150 0.999
160 0.997
170 0.9995
180 0.9985
190 0.999
200 0.999
};

\addplot [draw=color1, fill=color1, mark=square, only marks, mark size = 3]
table {%
10 0.5575
20 0.6715
30 0.801
40 0.8775
50 0.9295
60 0.9645
70 0.9825
80 0.9855
90 0.9965
100 0.998
110 0.998
120 0.9985
130 0.999
140 0.9995
150 1
160 1
170 1
180 1
190 1
200 1
};

\addplot [line width=1.08pt, color1, forget plot]
table {%
10 0.5575
20 0.6715
30 0.801
40 0.8775
50 0.9295
60 0.9645
70 0.9825
80 0.9855
90 0.9965
100 0.998
110 0.998
120 0.9985
130 0.999
140 0.9995
150 1
160 1
170 1
180 1
190 1
200 1
};

\addplot [draw=color2, fill=color2, mark=o, only marks, mark size = 3]
table {%
10 0.6495
20 0.6695
30 0.6965
40 0.709
50 0.727
60 0.747
70 0.7585
80 0.7735
90 0.772
100 0.785
110 0.7905
120 0.8075
130 0.8135
140 0.7945
150 0.809
160 0.816
170 0.814
180 0.8155
190 0.812
200 0.822
};

\addplot [line width=1.08pt, color2, forget plot]
table {%
10 0.6495
20 0.6695
30 0.6965
40 0.709
50 0.727
60 0.747
70 0.7585
80 0.7735
90 0.772
100 0.785
110 0.7905
120 0.8075
130 0.8135
140 0.7945
150 0.809
160 0.816
170 0.814
180 0.8155
190 0.812
200 0.822
};

\addplot [draw=color0, fill=color0, mark=*, only marks, mark size = 3]
table {%
10 0.5225
20 0.5125
30 0.5345
40 0.545
50 0.544
60 0.537
70 0.5455
80 0.5365
90 0.534
100 0.5425
110 0.5485
120 0.5445
130 0.557
140 0.5645
150 0.5645
160 0.5555
170 0.5775
180 0.5675
190 0.5655
200 0.564
};

\addplot [line width=1.08pt, color0, forget plot]
table {%
10 0.5225
20 0.5125
30 0.5345
40 0.545
50 0.544
60 0.537
70 0.5455
80 0.5365
90 0.534
100 0.5425
110 0.5485
120 0.5445
130 0.557
140 0.5645
150 0.5645
160 0.5555
170 0.5775
180 0.5675
190 0.5655
200 0.564
};

\end{groupplot}

\end{tikzpicture}}}\vspace{-.6cm}
  \caption{Comparing low-pass detection performance against (left) no.~of observed nodes $n$ ($M=100$), (right) no.~of observed samples $M$ ($n=100$). 
  The $\tau$ setting adjusts the sharpness of graph filters $e^{-\tau \matr{L}_{\rm norm}}$ or $e^{\tau \matr{L}_{\rm norm}}$.
  }\vspace{-.2cm}
  \label{fig:DetectLP}
\end{figure}
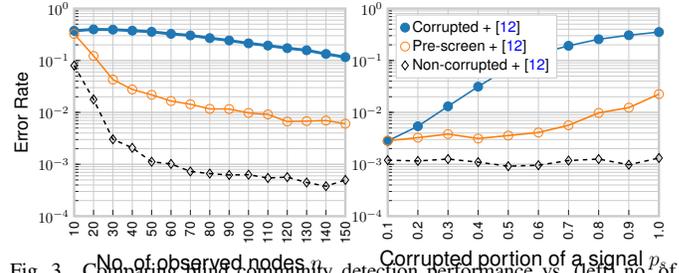
\begin{figure}[t]
\centering
  {\sf \resizebox{1.\linewidth}{!}{
\begin{tikzpicture}

\definecolor{darkgray176}{RGB}{176,176,176}
\definecolor{goldenrod1911910}{RGB}{191,191,0}
\definecolor{green01270}{RGB}{0,127,0}
\definecolor{color0}{rgb}{0.12156862745098,0.466666666666667,0.705882352941177}
\definecolor{color1}{rgb}{1,0.498039215686275,0.0549019607843137}
\definecolor{color2}{rgb}{0.172549019607843,0.627450980392157,0.172549019607843}
\definecolor{color3}{rgb}{1.0, 0.01, 0.24}



\begin{groupplot}[group style={group size=2 by 1}]
\nextgroupplot[
axis line style={white!80!black},
legend cell align={left},
legend style={
legend columns = -1,
  fill opacity=0.7,
  draw opacity=0.9,
  text opacity=1.0,
  legend pos = north east,
  at={(1.9,1.2)},
  draw=white!80!black,
  font = \normalsize
},
tick align=inside,
x grid style={white!80!black},
xlabel={\Large No. of observed nodes \(\displaystyle n\)},
xmajorgrids,
xmajorticks=true,
xmin=10, xmax=150,
xminorgrids,
xtick style={color=white!15!black},
xtick={10,20,30,40,50,60,70,80,90,100,110,120,130,140,150},
xticklabels={10,20,30,40,50,60,70,80,90,100,110,120,130,140,150},
xticklabel style={rotate=90},
xtick pos=left,
y grid style={white!80!black},
ylabel={\large Error Rate},
ymajorgrids,
ymajorticks=true,
ymin=0.0001, ymax=1.0,
yminorgrids,
ymode=log,
ytick style={color=white!15!black},
yticklabel style={font=\normalsize},
ytick pos=left,
width = 8cm,
height = 6.5cm,
axis line style = ultra thick
]

\addplot [draw=color0, fill=color0, mark=*, only marks, mark size = 3] table{%
10	0.373
20	0.3968
30	0.392833333333333
40	0.37555
50	0.3567
60	0.325366666666667
70	0.304157142857143
80	0.2697125
90	0.242744444444444
100	0.21411
110	0.193672727272727
120	0.172625
130	0.157353846153846
140	0.133521428571429
150	0.11592
};

\addplot [line width=2pt, color0, forget plot] table{%
10	0.373
20	0.3968
30	0.392833333333333
40	0.37555
50	0.3567
60	0.325366666666667
70	0.304157142857143
80	0.2697125
90	0.242744444444444
100	0.21411
110	0.193672727272727
120	0.172625
130	0.157353846153846
140	0.133521428571429
150	0.11592
};

\addplot [draw=color1, fill=color1, mark=o, only marks, mark size = 3] table{%
10	0.3247
20	0.12255
30	0.0428666666666666
40	0.027625
50	0.02178
60	0.01655
70	0.0143428571428571
80	0.011675
90	0.0116
100	0.00978999999999999
110	0.00915454545454545
120	0.00665833333333333
130	0.00677692307692307
140	0.00695714285714285
150	0.00606
};

\addplot [line width=1pt, color1, forget plot] table{%
10	0.3247
20	0.12255
30	0.0428666666666666
40	0.027625
50	0.02178
60	0.01655
70	0.0143428571428571
80	0.011675
90	0.0116
100	0.00978999999999999
110	0.00915454545454545
120	0.00665833333333333
130	0.00677692307692307
140	0.00695714285714285
150	0.00606
};

\addplot [draw=black, fill=color2, mark=diamond, only marks, mark size = 3] table{%
10	0.0798
20	0.0178
30	0.00303333333333333
40	0.002075
50	0.00112
60	0.00101666666666667
70	0.000728571428571429
80	0.0006625
90	0.000622222222222222
100	0.00063
110	0.000545454545454545
120	0.000566666666666667
130	0.000446153846153846
140	0.000378571428571429
150	0.0005
};

\addplot [line width=1pt, black, dashed, forget plot] table{%
10	0.0798
20	0.0178
30	0.00303333333333333
40	0.002075
50	0.00112
60	0.00101666666666667
70	0.000728571428571429
80	0.0006625
90	0.000622222222222222
100	0.00063
110	0.000545454545454545
120	0.000566666666666667
130	0.000446153846153846
140	0.000378571428571429
150	0.0005
};

\nextgroupplot[
axis line style={white!80!black},
legend cell align={left},
legend style={
  fill opacity=0.7,
  draw opacity=0.9,
  text opacity=1.0,
  legend pos = north west,
  draw=white!80!black,
  font = \normalsize
},
tick align=inside,
x grid style={white!80!black},
xlabel={\Large Corrupted portion of a signal $p_s$},
xmajorgrids,
xmajorticks=true,
xmin=0.1, xmax=1,
xminorgrids,
xtick style={color=white!15!black},
xtick={0.1,0.2,0.3,0.4,0.5,0.6,0.7,0.8,0.9,1},
xticklabels={0.1,0.2,0.3,0.4,0.5,0.6,0.7,0.8,0.9,1.0},
xticklabel style={rotate=90},
xtick pos=left,
y grid style={white!80!black},
ymajorgrids,
ymajorticks=true,
ymin=0.0001, ymax=1.0,
yminorgrids,
ymode=log,
ytick style={color=white!15!black},
yticklabel style={font=\normalsize},
ytick pos=left,
width = 8cm,
height = 6.5cm,
axis line style = ultra thick
]

\addlegendentry{Corrupted +\cite{wai2022partial}}
\addplot [draw=color0, fill=color0, mark=*, only marks, mark size = 3]
table {%
0.1	0.00282
0.2	0.0054
0.3	0.01308
0.4	0.0312999999999999
0.5	0.0709599999999998
0.6	0.1333
0.7	0.1918
0.8	0.25634
0.9	0.30634
1	0.35214
};
\addplot [line width=1.08pt, color0, forget plot]
table {%
0.1	0.00282
0.2	0.0054
0.3	0.01308
0.4	0.0312999999999999
0.5	0.0709599999999998
0.6	0.1333
0.7	0.1918
0.8	0.25634
0.9	0.30634
1	0.35214
};

\addlegendentry{Pre-screen +\cite{wai2022partial}}
\addplot [draw=color1, fill=color1, mark=o, only marks, mark size = 3]
table {%
0.1	0.00284
0.2	0.00326
0.3	0.00382
0.4	0.00314
0.5	0.00358
0.6	0.0041
0.7	0.00564
0.8	0.00978
0.9	0.01234
1	0.02248
};
\addplot [line width=1.08pt, color1, forget plot]
table {%
0.1	0.00284
0.2	0.00326
0.3	0.00382
0.4	0.00314
0.5	0.00358
0.6	0.0041
0.7	0.00564
0.8	0.00978
0.9	0.01234
1	0.02248
};

\addlegendentry{Non-corrupted +\cite{wai2022partial}}
\addplot [draw=black, fill=black, mark=diamond, only marks, mark size = 3]
table {%
0.1	0.0012
0.2	0.00116
0.3	0.00126
0.4	0.0011
0.5	0.00092
0.6	0.00096
0.7	0.00118
0.8	0.00126
0.9	0.00098
1	0.00132
};

\addplot [line width=1pt, black, dashed, forget plot]
table {%
0.1	0.0012
0.2	0.00116
0.3	0.00126
0.4	0.0011
0.5	0.00092
0.6	0.00096
0.7	0.00118
0.8	0.00126
0.9	0.00098
1	0.00132
};

\end{groupplot}

\end{tikzpicture}}}\vspace{-.6cm}
  \caption{Comparing blind community detection performance vs. (left) no.~of observed nodes $n$ ($p_s = 1$), (right) corrupted portion of signals $p_s$ ($n=50$).}\vspace{-.2cm}
  \label{fig:BlindCD}
\end{figure}

\subsection{Application: Robustifying Blind Community Detection}
We illustrate an application of Algorithm~\ref{alg:detector} as a pre-screening procedure before applying prior work that demands low-pass graph signals. We consider the blind community detection method \cite{wai2022partial} which directly infer communities in a graph from low-pass graph signals that are partially observed. To satisfy the low-pass graph signal requirement, we apply Algorithm~\ref{alg:detector} on \emph{small batches} of $M_{\sf batch}$ graph signal observations and retain (resp.~drop) the small batches that are identified as low-pass (resp.~non-low-pass). 
The pre-screened dataset is then provided to \cite{wai2022partial} to infer the communities.

We consider $G \sim \text{SBM}(150, 3, \log N/N, 7\log N/N$), with $N = 150$ nodes and $K = 3$ clusters. The \emph{normal} graph signals are generated using \eqref{eq:network_process}, \eqref{eq:yo} with $\sigma^2 = 10^{-2}$ and the filter ${\cal H}( \matr{S} ) = (\matr{I}-0.5\matr{L}_{\rm norm})^{3}$, where $10\%$ of samples are \emph{corrupted} in a burst of length $m_{\sf burst} = 10$, such that $p_s$-fraction of nodal observations are replaced with {${\cal N}(0,1)$}. For the pre-screening procedure, we apply Algorithm~\ref{alg:detector} on small batches of size $M_{\sf batch} = 50$ from {$M=10^3$ samples}, with $\delta=0.5$. Figure \ref{fig:BlindCD} reports the results of 1000 Monte-Carlo trials.

Observe the dataset corruption severely affects the performance of blind community detection \cite{wai2022partial}. Meanwhile, our pre-screening procedure robustifies the method in \cite{wai2022partial}. We note the effectiveness of pre-screening improves with $n$ as it approaches the performance of non-corrupted dataset, coinciding with Theorem~\ref{th:main} that low-pass detection becomes more accurate as $n$ increases. Pre-screening also delivers consistent improvement across different levels of signal corruption.\vspace{.2cm}

\noindent \textbf{Conclusions.} This paper studies the low-pass graph signal detection problem with partial observations. We showed that a simple $K$-means score detector can distinguish spectral pattern of the low-pass/non-low-pass signals and analyzed its sample complexity. Our work can robustify GSP on partially observed signals. Future work includes deriving an explicit bound w.r.t.~no.~of observed nodes $n$ and explore other applications.

\newpage 
\bibliographystyle{IEEEtran}
\bibliography{ref}

\newpage




\section*{Appendix: Proof of Theorem~\ref{th:main}}
{\bf Additional Notations.} In the following analysis, we define the QR decomposition of $\matr{U}_{o, K}$ as $\matr{U}_{o, K} = c_0 \matr{Q}_{K}\matr{R}_{K}$, in which $c_0 = \sqrt{n/N}$ is a normalization parameter, $\matr{Q}_{K } \in \mathbb{R}^{n \times K}$ is an orthogonal matrix spanning the range of $\matr{U}_{o, K}$ and $\matr{R}_{K}$ is upper-triangular. We also set the diagonal of $\matr{R}_K$ to be non-negative. Moreover, under H\ref{assu:sbm}, $\mathbf{A}$ satisfies $\mathbb{E}[\mathbf{A}] = \mathbf{Z}\mathbf{B}\mathbf{Z}^\top =: \mathcal{A}$; the population normalized Laplacian matrix of the \text{SBM} is then $\mathcal{L}_{\rm norm} = \matr{I} - \mathcal{D}^{-1/2}\mathcal{A}\mathcal{D}^{-1/2}$, where $\mathcal{D} = \diag(\sum_{j=1}^{N}\mathcal{A}_{1j}, ..., \sum_{j=1}^{N}\mathcal{A}_{Nj})$.
Lastly, the set $\mathcal{R}_{K}^{m \times n}$ consists of $m \times n$ matrices having at most $K$ unique rows. 

Our proof is adapted from \cite{zhang2023detecting, wai2022partial}. First, consider the ground truth as $\mathcal{T}_{\rm gnd} = \mathcal{T}_{0}$, which implies $\matr{U}_{K} = \matr{V}_{K}\matr{\Pi}$ for some permutation matrix $\matr{\Pi}$. Define the indicator matrix $\matr{X}^{*}$ be associated with the partition $\mathcal{C}^{*} \in  \arg\min_{\mathcal{C}}\mathbb{K}(\matr{Q}_{K}, \mathcal{C})$:
\begin{align*}
    \matr{X}^{*}_{ik} := \begin{cases} 1/\sqrt{|\mathcal{C}^{*}_{i}|} & \text{if } i \in \mathcal{C}^{*}_{i} \text{,}\\ 0 & \text{otherwise.}\end{cases}
\end{align*}
We observe that
\begin{align*}
    &\sqrt{\mathbb{K}^{*}(\widehat{\matr{Q}}_{K})}
    \leq ||(\matr{I} - \matr{X}^{*}(\matr{X}^{*})^{\top})\widehat{\matr{Q}}_{K}||_{\rm F}\\
    &= ||(\matr{I} - \matr{X}^{*}(\matr{X}^{*})^{\top})\widehat{\matr{Q}}_{K}\widehat{\matr{Q}}_{K}^{\top}||_{\rm F}\\
    &\leq ||(\matr{I} - \matr{X}^{*}(\matr{X}^{*})^{\top})\matr{Q}_{K}\matr{Q}_{K}^{\top}||_{\rm F} + ||\matr{Q}_{K}\matr{Q}_{K}^{\top} - \widehat{\matr{Q}}_{K}\widehat{\matr{Q}}_{K}||_{\rm F}\\
    &= \sqrt{\mathbb{K}^{*}(\matr{Q}_{K})} + ||\matr{Q}_{K}\matr{Q}_{K}^{\top} - \widehat{\matr{Q}}_{K}\widehat{\matr{Q}}_{K}||_{\rm F}.
\end{align*}
Similarly, we further have
\begin{align*}
    &\sqrt{\mathbb{K}^{*}(\matr{Q}_{K})}
    \leq {c_0^{-1}} \sqrt{\mathbb{K}^{*}(\matr{U}_{o, K})} + ||\matr{Q}_{K} - {c_0^{-1}} \matr{U}_{o, K}||_{\rm F}\\
    &\leq {c_0^{-1}} \sqrt{\mathbb{K}^{*}(\matr{U}_{o, K})} + ||\matr{Q}_{K}||_{\rm F}||\matr{I} - \matr{R}_{K}||_{\rm 2} \\
    &= {c_{0}^{-1}} \sqrt{\mathbb{K}^{*}(\matr{U}_{o, K})} + \sqrt{K}||\matr{I} - \matr{R}_{K}||_{\rm 2}.
\end{align*}
Define the orthogonal matrix $\overline{\mathcal{O}}_{K} = \mathcal{O}_{K}\matr{\Pi}$, where $\mathcal{O}_{K}$ is from Lemma \ref{lemma:full_eigenvectors}. Since $\mathcal{V}_{K}\overline{\mathcal{O}}_{K} \in \mathcal{R}_{K}^{N \times K}$ \cite{rohe2011sbm}, we have $\matr{E}_{o}\mathcal{V}_{K}\overline{\mathcal{O}}_{K} \in \mathcal{R}_{K}^{n \times K}$. Consequently, by H\ref{assu:sbm} and Lemma \ref{lemma:full_eigenvectors}, with probability at least $1-2/N$, 
\begin{align*}
    &\sqrt{\mathbb{K}^{*}(\matr{U}_{o, K})} = \min_{\overline{\matr{U}} \in \mathcal{R}_{K}^{n \times K}}||\matr{U}_{o, K} - \overline{\matr{U}}||_{\rm F}\\
    &\leq ||\matr{U}_{o, K} - \matr{E}_{o}\mathcal{V}_{K}\overline{\mathcal{O}}_{K}||_{\rm F} \leq ||\matr{E}_{o}||_{2}||\matr{U}_{K} - \mathcal{V}_{K}\overline{\mathcal{O}}_{K}||_{\rm F}\\
    &= ||\matr{U}_{K} - \mathcal{V}_{K}\overline{\mathcal{O}}_{K}||_{\rm F} \leq \frac{35\sqrt{K^3 \log N}}{\sqrt{p(N-K)}}.
\end{align*}
Combining the upper-bound of $\sqrt{\mathbb{K}^{*}(\matr{Q}_{K})}$ with Lemma \ref{lemma:qk} as well as H\ref{assu:spectral_gap}, H\ref{assu:sharp}, we conclude that when the null hypothesis holds, with probability at least $1 - 2/N - 5/M - \delta_{\rm gap}$,
\begin{align}
    &\mathbb{K}^{*}(\widehat{\matr{Q}}_{K}) \leq \Bigg[ \sqrt{\frac{N}{n}} \frac{35\sqrt{K^3 \log N}}{\sqrt{p(N-K)}} + \sqrt{K}||\matr{I} - \matr{R}_{K}||_{\rm 2} \nonumber\\
    &+2\sqrt{K}\Bigg(3\gamma\eta
    + \frac{c_1 \tr(\overline{\matr{C}}_{o})\sqrt{2\log M/M} + \sigma^2}{\rho_{\rm gap}}\Bigg)\Bigg]^2. \label{eq:upper_bound}
\end{align}

The next case is to consider the ground truth as $\mathcal{T}_{gnd} = \mathcal{T}_1$. 
Define $\widehat{\matr{X}}$ associated with $\widehat{\mathcal{C}} \in \arg\min_{\mathcal{C}}\mathbb{K}(\widehat{\matr{Q}}_{K}, \mathcal{C})$. Similar to the previous case, we have
\begin{align*}
    &\sqrt{\mathbb{K}^{*}(\matr{Q}_{K})}
    \leq ||(\matr{I} - \widehat{\matr{X}}\widehat{\matr{X}}^{\top})\matr{Q}_{K}||_{\rm F}\\
    &= ||(\matr{I} - \widehat{\matr{X}}\widehat{\matr{X}}^{\top})\matr{Q}_{K}\matr{Q}_{K}^{\top}||_{\rm F}\\
    &\leq ||(\matr{I} - \widehat{\matr{X}}\widehat{\matr{X}}^{\top})\widehat{\matr{Q}}_{K}\widehat{\matr{Q}}_{K}^{\top}||_{\rm F} + ||\matr{Q}_{K}\matr{Q}_{K}^{\top} - \widehat{\matr{Q}}_{K}\widehat{\matr{Q}}_{K}||_{\rm F}\\
    &= \sqrt{\mathbb{K}^{*}(\widehat{\matr{Q}}_{K})} + ||\matr{Q}_{K}\matr{Q}_{K}^{\top} - \widehat{\matr{Q}}_{K}\widehat{\matr{Q}}_{K}||_{\rm F},
\end{align*}
which implies $\sqrt{\mathbb{K}^{*}(\widehat{\matr{Q}}_{K})} \geq \sqrt{\mathbb{K}^{*}(\matr{Q}_{K})} - ||\matr{Q}_{K}\matr{Q}_{K}^{\top} - \widehat{\matr{Q}}_{K}\widehat{\matr{Q}}_{K}||_{\rm F}$. By the same technique, we  obtain $\sqrt{\mathbb{K}^{*}(\matr{Q}_K)} \geq c_{0}^{-1}\sqrt{\mathbb{K}^{*}(\matr{U}_{o, K})} - \sqrt{K}||\matr{I} - \matr{R}_{K}||_{\rm 2}$.

Our remaining task is to lower bound $\sqrt{\mathbb{K}^{*}(\matr{U}_{o, K})}$ using H\ref{assu:csbm}. Let $\matr{U}_{r, s} = [\matr{u}_r, ..., \matr{u}_{s}]$ consist of column vectors from $\matr{U}$, with $r \leq s$. Also, let $\pi$ be a permutation function on $\{1, ..., N\}$, satisfying $|h_i| = |h(\lambda_{\pi(i)})|$. We can see that the set $\mathcal{P} := \{i: 1 \leq i \leq K, K+1 \leq \pi(i) \leq N\}$ is non-empty under $\mathcal{T}_{gnd} = \mathcal{T}_{1}$. Then, for any $r \leq s$ such that $[r, s] \in \mathcal{P}$, by Lemma \ref{lemma:obs},
\begin{align*}
    \mathbb{K}^{*}(\matr{U}_{o, K})
    &\geq \mathbb{K}^{*}(\matr{U}_{K}) - |\mathbb{K}^{*}(\matr{U}_{o, K}) - \mathbb{K}^{*}(\matr{U}_{K})|\\
    &\geq \mathbb{K}^{*}(\matr{U}_{K}) - \frac{2450 K^3 \log N}{p(N-K)}\\
    &\geq \mathbb{K}^{*}(\matr{U}_{r, s}) - \frac{2450 K^3 \log N}{p(N-K)}.
\end{align*}
By H\ref{assu:csbm}, we have $\mathbb{K}^{*}(\matr{U}_{r, s}) \geq c_{\text{\text{SBM}}}$. Together with H\ref{assu:spectral_gap}, H\ref{assu:sharp}, with probability at least $1 - 4/N - 5/M - \delta_{\rm gap} - \delta_{\rm \text{SBM}}$, the following lower bound holds
\begin{align}
    &\mathbb{K}^{*}(\widehat{\matr{Q}}_{K}) \geq \Bigg[\sqrt{\frac{N}{n}}\sqrt{c_{\rm \text{SBM}} - \frac{2450 K^3 \log N}{p(N-K)}}- \sqrt{K}||\matr{I} - \matr{R}_{K}||_{\rm 2} \nonumber\\
    &-2\sqrt{K}\Bigg(3\gamma\eta
    + \frac{c_1 \tr(\overline{\matr{C}}_{o})\sqrt{2\log M/M} + \sigma^2}{\rho_{\rm gap}}\Bigg)\Bigg]^2. \label{eq:lower_bound}
\end{align}

Finally, we can conclude the proof by noting that $\widehat{\mathcal{T}} = \mathcal{T}_{\rm gnd}$ holds when $\delta$ upper bounds the right-hand side of \eqref{eq:upper_bound} and also lower bounds the right-hand side of \eqref{eq:lower_bound}.

\subsection*{Technical Lemmas}
\begin{Lemma} \label{lemma:obs}
Under H\ref{assu:sbm}. Let $\matr{U}_{K}$ denote the columns of the first $K$ eigenvectors of $\matr{L}_{\rm norm}$, and $\matr{U}_{o, K} = \matr{E}_{o}\matr{U}_{K}$.
With probability at least $1 - 4/N$,
\begin{align*}
    |\mathbb{K}^{*}(\matr{U}_{o, K}) - \mathbb{K}^{*}(\matr{U}_{K})| \leq \frac{2450 K^3 \log N}{p(N-K)}.
\end{align*}
\end{Lemma}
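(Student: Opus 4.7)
The plan is to exploit the fact that in the SBM, the population eigenvectors $\mathcal{V}_K$ of $\mathcal{L}_{\rm norm}$ have exactly $K$ distinct rows (one per block), so $\mathcal{V}_K$ itself already belongs to $\mathcal{R}_K^{N \times K}$ and serves as an ``ideal'' cluster-indicator target. Consequently, both $\mathbb{K}^*(\matr{U}_K)$ and $\mathbb{K}^*(\matr{U}_{o,K})$ are already very small, because $\matr{U}_K$ is close to $\mathcal{V}_K$ up to an orthogonal rotation via Lemma~\ref{lemma:full_eigenvectors}; controlling their difference then reduces to controlling each one individually.

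First, I would recall the matrix form of the K-means score: for any $\matr{N} \in \mathbb{R}^{m \times K}$, $\mathbb{K}^*(\matr{N}) = \min_{\matr{M} \in \mathcal{R}_K^{m \times K}} \|\matr{N} - \matr{M}\|_F^2$, which follows from placing each centroid at the mean of its cluster. I would then invoke Lemma~\ref{lemma:full_eigenvectors}: with high probability, there exists an orthogonal $\mathcal{O}_K$ with $\|\matr{U}_K - \mathcal{V}_K \mathcal{O}_K\|_F \leq 35 \sqrt{K^3 \log N / (p(N-K))}$. Since $\mathcal{V}_K$ has exactly $K$ unique rows under H\ref{assu:sbm}, and right-multiplication by the orthogonal $\mathcal{O}_K$ preserves row distinctness, $\mathcal{V}_K \mathcal{O}_K \in \mathcal{R}_K^{N \times K}$; row-sampling then gives $\matr{E}_o \mathcal{V}_K \mathcal{O}_K \in \mathcal{R}_K^{n \times K}$ as well. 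Using these as feasible choices in the K-means minimization, together with $\|\matr{E}_o \matr{A}\|_F \leq \|\matr{A}\|_F$ for any $\matr{A}$, I obtain
\[ \mathbb{K}^*(\matr{U}_K) \leq \|\matr{U}_K - \mathcal{V}_K \mathcal{O}_K\|_F^2, \quad \mathbb{K}^*(\matr{U}_{o,K}) \leq \|\matr{U}_K - \mathcal{V}_K \mathcal{O}_K\|_F^2. \]

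Combining via $|a - b| \leq a + b$ for non-negative scalars, and squaring the Rohe-type bound, yields $|\mathbb{K}^*(\matr{U}_{o,K}) - \mathbb{K}^*(\matr{U}_K)| \leq 2 \cdot 35^2 \cdot K^3 \log N / (p(N-K)) = 2450 K^3 \log N / (p(N-K))$, as claimed. The $4/N$ failure probability (rather than $2/N$) can be read as a conservative union bound absorbing the auxiliary concentration events underlying the eigenvector bound of Lemma~\ref{lemma:full_eigenvectors}, invoked once per quantity being controlled. The argument is essentially routine given this setup; the only real subtlety is noticing that $\mathcal{V}_K$ sits \emph{exactly} in $\mathcal{R}_K^{N \times K}$, so both K-means scores collapse to controlling a single Frobenius perturbation already quantified by the Davis--Kahan-type bound, and no new concentration inequalities are required.
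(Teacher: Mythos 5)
Your proposal is correct and takes essentially the same route as the paper: bound $|\mathbb{K}^*(\matr{U}_{o,K}) - \mathbb{K}^*(\matr{U}_K)|$ by the sum of the two non-negative scores, then bound each score by the squared Frobenius distance to the feasible points $\mathcal{V}_K\mathcal{O}_K$ and $\matr{E}_o\mathcal{V}_K\mathcal{O}_K$ in $\mathcal{R}_K$ furnished by Lemma~\ref{lemma:full_eigenvectors}, using $\|\matr{E}_o\|_2 = 1$, which gives $2\cdot 35^2 = 2450$. The paper formally routes the bound on $\mathbb{K}^*(\matr{U}_K)$ through Lemma~\ref{lemma:kmeans} and union-bounds two $2/N$ events to reach $4/N$, exactly the conservative reading you describe.
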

\begin{proof}
By the triangular inequality, 
\[
|\mathbb{K}^{*}(\matr{U}_{o, K}) - \mathbb{K}^{*}(\matr{U}_{K})| \leq \mathbb{K}^{*}(\matr{U}_{o, K}) + \mathbb{K}^{*}(\matr{U}_{K}).
\]
Applying Lemma \ref{lemma:kmeans} yields $\mathbb{K}^{*}(\matr{U}_{K}) \leq \frac{35^2 K^3 \log N}{pN(N-K)}$ with probability at least $1-2/N$. 

We now derive an upper-bound for $\mathbb{K}^{*}(\matr{U}_{o, K})$. As $\mathcal{V}_{K}\mathcal{O}_{K} \in \mathcal{R}_{K}^{N \times K}$, we have $\matr{E}_{o}\mathcal{V}_{K}\mathcal{O}_{K} \in \mathcal{R}_{K}^{n \times K}$. Then, with probability at least $1-2/N$,
\begin{align*}
    &\mathbb{K}^{*}(\matr{U}_{o, K}) \leq ||\matr{U}_{o, K} - \matr{E}_{o}\mathcal{V}_{K}\mathcal{O}_{K}||_{\rm F}^{2} \leq ||\matr{E}_{o}||_{\rm 2}^2||\matr{U}_{K} - \mathcal{V}_{K}\mathcal{O}_{K}||_{\rm F}^{2}\\
    &= ||\matr{U}_{K} - \mathcal{V}_{K}\mathcal{O}_{K}||_{\rm F}^{2} = \frac{35^2 K^3 \log N}{p(N-K)}.
\end{align*}
This concludes the proof.
\end{proof}

\begin{Lemma} \label{lemma:qk}
Under H\ref{assu:spectral_gap}, H\ref{assu:sharp}, the following inequality holds with probability at least $1-5/M$
\begin{align*}
    &||\matr{Q}_{K}\matr{Q}_{K}^{\top} - \widehat{\matr{Q}}_{K}\widehat{\matr{Q}}_{K}||_{\rm F}\leq\\
    &2\sqrt{K}\Bigg(3\gamma\eta
    + \frac{c_1 \tr(\overline{\matr{C}}_{o})\sqrt{2\log M/M} + \sigma^2}{\rho_{\rm gap}}\Bigg).
\end{align*}
where $c_1$ is a constant independent of $N, M$ \cite{bunea2015covariance}.  
\end{Lemma}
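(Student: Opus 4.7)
The plan is to decompose $\|\matr{Q}_K\matr{Q}_K^\top - \widehat{\matr{Q}}_K\widehat{\matr{Q}}_K^\top\|_F$ into a \emph{deterministic} filter-approximation term and a \emph{stochastic} covariance-estimation term, and to bound each via a Davis--Kahan $\sin\Theta$ argument. Let $\matr{P}_K^\star$ denote the top-$K$ eigenprojector of the noiseless covariance $\overline{\matr{C}}_o$; since $\matr{C}_o=\overline{\matr{C}}_o+\sigma^2\matr{I}$, $\matr{P}_K^\star$ is also the top-$K$ eigenprojector of the true noisy covariance $\matr{C}_o$. The triangle inequality then gives
\[
\|\matr{Q}_K\matr{Q}_K^\top - \widehat{\matr{Q}}_K\widehat{\matr{Q}}_K^\top\|_F \leq \|\matr{Q}_K\matr{Q}_K^\top - \matr{P}_K^\star\|_F + \|\matr{P}_K^\star - \widehat{\matr{Q}}_K\widehat{\matr{Q}}_K^\top\|_F,
\]
and the two summands inside the $2\sqrt{K}(\cdots)$ of the stated bound correspond precisely to these two pieces.

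For the stochastic term I would apply Davis--Kahan $\sin\Theta$ to the pair $(\matr{C}_o,\widehat{\matr{C}}_o)$. H\ref{assu:spectral_gap} is tailored to this use: the $-\|\widehat{\matr{C}}_o-\overline{\matr{C}}_o\|_2$ inside its definition is exactly the Weyl correction that keeps the perturbed spectral gap positive, so $\rho_{\rm gap}$ can be used directly as the Davis--Kahan denominator. Converting operator norm to Frobenius norm via the rank-$\leq 2K$ structure of the projector difference supplies the $\sqrt{2K}$ factor absorbed in $2\sqrt{K}$. Splitting $\|\widehat{\matr{C}}_o-\matr{C}_o\|_2 \leq \|\widehat{\matr{C}}_o-\overline{\matr{C}}_o\|_2 + \|\overline{\matr{C}}_o-\matr{C}_o\|_2$ with the latter equal to $\sigma^2$ gives the $\sigma^2$ contribution; the former is controlled by the Bunea--Xiao sample-covariance concentration inequality \cite{bunea2015covariance}, which yields $\|\widehat{\matr{C}}_o-\matr{C}_o\|_2 \leq c_1\tr(\overline{\matr{C}}_o)\sqrt{2\log M/M}$ with probability at least $1-5/M$. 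This collects into the stochastic summand $\frac{c_1 \tr(\overline{\matr{C}}_o)\sqrt{2\log M/M}+\sigma^2}{\rho_{\rm gap}}$.

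For the deterministic term, write $\overline{\matr{C}}_o=\matr{M}_K+\matr{M}_{\bar K}$ with $\matr{M}_K:=\matr{U}_{o,K}\matr{h}_K^2\matr{U}_{o,K}^\top$ and $\matr{M}_{\bar K}:=\matr{U}_{o,\bar K}\matr{h}_{\bar K}^2\matr{U}_{o,\bar K}^\top$, where $\bar K=\{K{+}1,\dots,N\}$. The first summand is rank-$K$ with column span equal to $\text{range}(\matr{Q}_K)$, so its top-$K$ eigenprojector coincides with $\matr{Q}_K\matr{Q}_K^\top$. H\ref{assu:sharp} bounds $\|\matr{M}_{\bar K}\|_2 \leq \eta^2 \max_{i\leq K} h_i^2$, while the smallest nonzero eigenvalue of $\matr{M}_K$ can be lower-bounded by a quantity of order $\min_{i\leq K}h_i^2$ (with dependence on the flatness $\gamma$ and the nondegeneracy of $\matr{R}_K$). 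A Davis--Kahan $\sin\Theta$ argument applied to this rank-$K$-plus-perturbation decomposition then produces an operator-norm bound $\|\matr{Q}_K\matr{Q}_K^\top - \matr{P}_K^\star\|_2 \leq 3\gamma\eta$, and converting to Frobenius via the same rank-$\leq 2K$ bound adds the prefactor $2\sqrt{K}$.

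The main obstacle is this deterministic step. Because $\matr{E}_o$ discards rows, $\matr{U}_{o,K}$ is neither orthonormal nor diagonalises $\overline{\matr{C}}_o$; it is only a QR-rotation of a row-sampled eigenvector block through $\matr{R}_K$. The delicate task is tracking how the off-diagonal couplings between the in-band block $\matr{U}_{o,K}$ and the out-of-band block $\matr{U}_{o,\bar K}$, together with the non-uniform in-band responses captured by $\gamma$, deform the top-$K$ invariant subspace of $\overline{\matr{C}}_o$ away from $\text{range}(\matr{Q}_K)$. One must verify that the Davis--Kahan denominator stays proportional to $\min_{i\leq K}h_i^2$, degrading only by the flatness factor $\gamma$ rather than by the row-sampling ratio $n/N$; this is what yields the clean $3\gamma\eta$ constant and mirrors the perturbation analyses of \cite{wai2022partial, zhang2023detecting}.
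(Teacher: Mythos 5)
Your architecture coincides with the paper's at the top level: both split $\|\matr{Q}_K\matr{Q}_K^\top-\widehat{\matr{Q}}_K\widehat{\matr{Q}}_K^\top\|_{\rm F}$ into a deterministic filter-leakage term (producing $3\gamma\eta$) and a stochastic covariance-concentration term (producing $(c_1\tr(\overline{\matr{C}}_o)\sqrt{2\log M/M}+\sigma^2)/\rho_{\rm gap}$), and both invoke the Bunea--Xiao inequality for the latter. The difference is that the paper does not rederive the deterministic half: it quotes \cite[Proposition~1]{wai2022partial} wholesale, which already delivers the combined bound $\sqrt{2K}\bigl(\sqrt{2}\gamma(2\|\matr{U}_K\|_2+\|\matr{U}_{N-K}\|_2)\eta_K+\|\widehat{\matr{C}}_o-\overline{\matr{C}}_o\|_2/\rho_{\rm gap}\bigr)$, after which only $\|\matr{U}_K\|_2,\|\matr{U}_{N-K}\|_2\le 1$ and the concentration step remain. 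Your stochastic half is essentially fine, modulo a small inversion of the triangle inequality: Bunea--Xiao controls $\|\widehat{\matr{C}}_o-\matr{C}_o\|_2$, and the extra $\sigma^2$ appears when you pass from $\matr{C}_o$ to $\overline{\matr{C}}_o$, not the other way around.

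The genuine gap is in the deterministic half, which you yourself flag as ``the main obstacle'' but do not close. A Davis--Kahan argument applied to $\overline{\matr{C}}_o=\matr{M}_K+\matr{M}_{\bar K}$ has denominator $\lambda_K(\matr{M}_K)=\lambda_{\min}\bigl(c_0^2\,\matr{R}_K\matr{h}_K^2\matr{R}_K^\top\bigr)\ \ge\ c_0^2\,\sigma_{\min}(\matr{R}_K)^2\min_{i\le K}h_i^2$ with $c_0^2=n/N$, so on its face the resulting bound \emph{does} degrade with the sampling ratio and with the conditioning of $\matr{R}_K$ --- exactly the dependence you assert, without proof, is absent. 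Moreover, the numerator-over-denominator ratio from your own estimates ($\|\matr{M}_{\bar K}\|_2\le\eta^2\min_{i\le K}h_i^2$ against a gap of order $\min_{i\le K}h_i^2$) would naturally yield a quantity of order $\gamma\eta^2/(c_0^2\sigma_{\min}(\matr{R}_K)^2)$, which neither matches the first-order constant $3\gamma\eta$ in the statement nor is provably below it without additional control of $\matr{R}_K$. As written, the step $\|\matr{Q}_K\matr{Q}_K^\top-\matr{P}_K^\star\|_2\le 3\gamma\eta$ is asserted rather than proved; to close it you would need to either reproduce the actual perturbation argument of \cite[Proposition~1]{wai2022partial} (which is stated for the row-sampled setting and absorbs these effects into the constants $2\|\matr{U}_K\|_2+\|\matr{U}_{N-K}\|_2$) or add an explicit lemma lower-bounding $\sigma_{\min}(\matr{R}_K)$.
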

\begin{proof}
By \cite[Proposition 1]{wai2022partial}, we have a deterministic upper-bound:
\begin{align*}
    &||\matr{Q}_{K}\matr{Q}_{K}^{\top} - \widehat{\matr{Q}}_{K}\widehat{\matr{Q}}_{K}||_{\rm F} \leq\\
    &\sqrt{2K}\left(\sqrt{2}\gamma(2||\matr{U}_{K}||_{2} + ||\matr{U}_{N-K}||_{2})\eta_{K} + \frac{||\widehat{\matr{C}}_{o} - \overline{\matr{C}}_{o}||_{\rm 2}}{\rho_{\rm gap}}\right)
\end{align*}
where we further have $||\matr{U}_{K}||_{2} \leq 1$ and $||\matr{U}_{N - K}||_{2} \leq 1$ due to their orthogonality. In addition, applying \cite[Theorem 2.1]{bunea2015covariance} on $\{\matr{y}_{o, m}\}_{m=1}^{M}$ yields the following inequality: with probability at least $1 - 5/M$,
\begin{align*}
    ||\widehat{\matr{C}}_{o} - \overline{\matr{C}}_{o}||_{\rm 2} \leq 2 c_1 \tr(\overline{\matr{C}}_{o})\sqrt{\frac{\log M}{M}} + \sigma^2.
\end{align*}
This concludes the proof.
\end{proof}

The last two auxiliary lemmas are borrowed from \cite{zhang2023detecting}, which have been inspired by \cite{rohe2011sbm, bunea2015covariance}:
\begin{Lemma}[\texorpdfstring{\cite[Proposition 2]{zhang2023detecting}}{Proposition 2 by Zhang et al.}] \label{lemma:kmeans}
Under H\ref{assu:sbm}. For $\matr{V}_{K}$ consisting of $K$ bottom eigenvectors of $\matr{L}_{\rm norm}$, with probability at least $1-2/N$,
\[
\mathbb{K}^{*}(\matr{V}_{K}) \leq \frac{35^2 K^3 \log N}{p(N-K)}.
\]
\end{Lemma}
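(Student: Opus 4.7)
The plan is to reduce the $K$-means score of the empirical bottom eigenvectors to an eigenvector perturbation bound, using the block-constant structure of the \emph{population} bottom eigenvectors as the comparator. Let $\mathcal{V}_K$ denote the $K$ bottom eigenvectors of the population normalized Laplacian $\mathcal{L}_{\rm norm}$ associated with $\mathcal{A} = \matr{Z}\matr{B}\matr{Z}^\top$. Because $\mathcal{A}$ has rank $K$ and $\matr{B} = p\matr{I} + r\matr{1}\matr{1}^\top$ is invertible, the columns of $\mathcal{V}_K$ lie in the column span of $\mathcal{D}^{-1/2}\matr{Z}$, so one can write $\mathcal{V}_K = \matr{Z}\matr{H}$ for some $\matr{H}\in\mathbb{R}^{K\times K}$. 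In particular, rows of $\mathcal{V}_K$ indexed by nodes in the same block coincide. Hence there exists an orthogonal $\mathcal{O}_K$ with $\mathcal{V}_K\mathcal{O}_K \in \mathcal{R}_K^{N\times K}$ and $\mathbb{K}^*(\mathcal{V}_K\mathcal{O}_K)=0$.

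The next step is a one-line feasibility argument: since $\mathbb{K}^*(\matr{V}_K)$ is the minimum over $K$-row matrices and $\mathcal{V}_K\mathcal{O}_K$ is such a matrix,
\begin{equation*}
\mathbb{K}^*(\matr{V}_K) \;=\; \min_{\matr{M}\in\mathcal{R}_K^{N\times K}} \|\matr{V}_K-\matr{M}\|_{\rm F}^2 \;\leq\; \|\matr{V}_K-\mathcal{V}_K\mathcal{O}_K\|_{\rm F}^2.
\end{equation*}
The task thus reduces to controlling the eigenvector perturbation $\|\matr{V}_K-\mathcal{V}_K\mathcal{O}_K\|_{\rm F}$ between the empirical and population bottom eigenspaces.

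I then invoke a Davis--Kahan $\sin\Theta$ bound to get
$\|\matr{V}_K-\mathcal{V}_K\mathcal{O}_K\|_{\rm F} \lesssim \sqrt{K}\,\|\matr{L}_{\rm norm}-\mathcal{L}_{\rm norm}\|_2 / \gamma_K$,
where $\gamma_K$ is the gap between $\lambda_K(\mathcal{L}_{\rm norm})$ and $\lambda_{K+1}(\mathcal{L}_{\rm norm})=1$. The numerator is handled by a matrix-Bernstein concentration inequality applied to $\matr{L}_{\rm norm}-\mathcal{L}_{\rm norm}$ as in Rohe--Chatterjee--Yu \cite{rohe2011sbm}: under the edge-density condition $p/K+r \geq (32\log N+1)/N$ in H\ref{assu:sbm}, one gets $\|\matr{L}_{\rm norm}-\mathcal{L}_{\rm norm}\|_2 \leq c\sqrt{\log N/(pN)}$ with probability at least $1-2/N$. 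The denominator $\gamma_K$ is bounded below by $\Theta(p/(p+Kr))$, obtained by diagonalizing the $K\times K$ reduced matrix $\mathcal{D}^{-1/2}\matr{Z}\matr{B}\matr{Z}^\top\mathcal{D}^{-1/2}$ in the block-constant basis. Combining these two estimates and squaring yields the displayed constant $35^2$, matching the bound used in the proof of Lemma~\ref{lemma:obs} and in the main theorem.

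The main obstacle is the lower bound on the spectral gap $\gamma_K$: the eigenvalues of $\mathcal{L}_{\rm norm}$ depend in a somewhat delicate way on $(p,r,K,N)$ through the block-diagonalization, and the density hypothesis must be used precisely to ensure $\gamma_K$ is large enough to absorb the concentration error and produce the stated $K^3/(p(N-K))$ scaling. Once this eigengap estimate is in hand, the remaining steps are standard Davis--Kahan and matrix-Bernstein calculations, and the constant $35$ is merely the combined numerical factor appearing in those two inequalities.
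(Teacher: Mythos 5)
Your proposal is correct and follows essentially the same route as the paper: the paper imports this lemma from \cite{zhang2023detecting}, but its implied derivation is exactly your feasibility argument $\mathbb{K}^{*}(\matr{V}_K)=\min_{\matr{M}\in\mathcal{R}_K^{N\times K}}\|\matr{V}_K-\matr{M}\|_{\rm F}^2\leq\|\matr{V}_K-\mathcal{V}_K\mathcal{O}_K\|_{\rm F}^2$ combined with the eigenvector perturbation bound of Lemma~\ref{lemma:full_eigenvectors}, whose right-hand side squares to the stated constant. Your additional Davis--Kahan/matrix-Bernstein sketch simply unpacks that perturbation bound along the lines of \cite{rohe2011sbm}, and the eigengap $p/(p+Kr)$ you assert is indeed what the equal-block-size, $\matr{B}=p\matr{I}+r\matr{1}\matr{1}^{\top}$ structure yields.
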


\begin{Lemma}[\texorpdfstring{\cite[Lemma 2]{zhang2023detecting}}{Lemma 2 from Zhang et al.}] \label{lemma:full_eigenvectors}
Under H\ref{assu:sbm}. Let $\matr{V}_{K}, \mathcal{V}_{K}$ denote the columns of the first $K$ eigenvectors of $\matr{L}_{\rm norm}, \mathcal{L}_{\rm norm}$. With probability at least $1 - 2/N$, there exists an orthogonal matrix $\mathcal{O}_{K} \in \mathbb{R}^{K \times K}$ such that
\[
||\matr{V}_{K} - \mathcal{V}_{K}\mathcal{O}_{K}||_{\rm F} \leq \frac{35\sqrt{K^3 \log N}}{\sqrt{p(N-K)}}.
\]
\end{Lemma}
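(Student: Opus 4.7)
The plan is to prove this eigenvector perturbation bound by following the standard Davis-Kahan + concentration template of Rohe-Chatterjee-Yu (to which Zhang et al.\ themselves attribute the result). The argument decomposes into three pieces: a Davis-Kahan-type $\sin\Theta$ inequality that controls the invariant subspace distance by a ratio of spectral perturbation to eigengap; a spectral concentration bound on $\| \matr{L}_{\rm norm} - \mathcal{L}_{\rm norm} \|_{\rm 2}$ under the SBM degree assumption; and a closed-form lower bound on the population eigengap $\lambda_{K+1}(\mathcal{L}_{\rm norm}) - \lambda_{K}(\mathcal{L}_{\rm norm})$.

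First I would invoke a Davis-Kahan variant, e.g.\ the Yu-Wang-Samworth version, which produces the orthogonal matrix $\mathcal{O}_K$ explicitly and delivers
\[
\| \matr{V}_K - \mathcal{V}_K \mathcal{O}_K \|_{\rm F} \leq \frac{ C\sqrt{K}\, \| \matr{L}_{\rm norm} - \mathcal{L}_{\rm norm} \|_{\rm 2} }{ \lambda_{K+1}(\mathcal{L}_{\rm norm}) - \lambda_{K}(\mathcal{L}_{\rm norm}) }
\]
for an absolute constant $C$, where the $\sqrt{K}$ comes from passing between spectral and Frobenius norms on a $K$-dimensional subspace. For the denominator, note that with $\matr{B} = p\matr{I} + r\matr{1}\matr{1}^\top$ and equal-sized blocks, the population degrees are uniform, so $\mathcal{L}_{\rm norm}$ has a fully explicit spectrum: its $K$ smallest eigenvalues equal $0$ and $rK/(p + rK)$ (with multiplicity $K-1$), and $\lambda_{K+1}(\mathcal{L}_{\rm norm}) = 1$, yielding gap $p/(p + rK)$.

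Second I would bound the random spectral perturbation $\| \matr{L}_{\rm norm} - \mathcal{L}_{\rm norm} \|_{\rm 2}$. The density assumption $p/K + r \geq (32\log N + 1)/N$ in H\ref{assu:sbm} guarantees the expected minimum degree is $\gtrsim \log N$, which is precisely the regime where matrix Bernstein (or Oliveira's inequality, as used in Rohe et al.) delivers a bound of order $\sqrt{\log N / (pN)}$ with probability at least $1 - 2/N$. Combining the two steps yields $\| \matr{V}_K - \mathcal{V}_K \mathcal{O}_K \|_{\rm F} = O(\sqrt{K^3 \log N / (p (N-K))})$, where the explicit constant $35$ in the stated bound is recovered by tracking the Bernstein constants and the union bound across $N$ nodes.

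The principal obstacle is the concentration step for the \emph{normalized} Laplacian, since the normalizer $\matr{D}^{-1/2}$ is itself a random quantity that can amplify deviations at low-degree nodes. The standard remedy, which I would adopt, is to decompose
\[
\matr{L}_{\rm norm} - \mathcal{L}_{\rm norm} = \mathcal{D}^{-1/2}(\matr{A} - \mathcal{A})\mathcal{D}^{-1/2} + \text{(corrections from } \matr{D}^{-1/2} - \mathcal{D}^{-1/2}\text{)},
\]
and bound each piece separately: matrix Bernstein controls the leading term via a weighted centered sum of independent rank-one matrices, while the corrections are handled by an entrywise degree concentration of the form $|\matr{D}_{ii} - \mathcal{D}_{ii}| = O(\sqrt{\mathcal{D}_{ii} \log N})$ followed by a first-order expansion of $x \mapsto x^{-1/2}$ around $\mathcal{D}_{ii}$. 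This expansion is tight precisely because the minimum expected degree exceeds $\Omega(\log N)$ under H\ref{assu:sbm}, which is exactly why that assumption takes the form $p/K + r \geq (32 \log N + 1)/N$.
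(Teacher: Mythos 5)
The paper does not prove this lemma at all --- it imports it verbatim from \cite{zhang2023detecting} (Lemma 2 there), which is in turn built on the Davis--Kahan-plus-concentration template of Rohe et al., i.e., exactly the route you outline (explicit population eigengap $p/(p+rK)$ from the spectrum of $\mathcal{L}_{\rm norm}$ under the equal-block, $\matr{B}=p\matr{I}+r\matr{1}\matr{1}^{\top}$ structure; spectral concentration of the normalized Laplacian enabled by the minimum expected-degree condition in H\ref{assu:sbm}; and a $\sin\Theta$ variant that produces the orthogonal matrix $\mathcal{O}_K$). Your reconstruction is correct and consistent with the cited source, with only the bookkeeping for the explicit constant $35$ left implicit --- which matches the level of detail available here, since the paper itself offers no proof to compare against.
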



\end{document}